\documentclass[runningheads]{llncs}

\usepackage{graphicx}

\usepackage{cite}
\usepackage{amsmath,amsfonts,amssymb,amscd}
\usepackage{balance}
\usepackage{mathtools}
\usepackage{algorithm2e}
\usepackage{textcomp}
\usepackage{xcolor}
\usepackage{caption}
\usepackage{subcaption}
\usepackage{siunitx}
\usepackage{stmaryrd}
\usepackage{aligned-overset}
\usepackage{bbm}
\usepackage{breqn}
\usepackage{verbatim}
\usepackage{todonotes}
\usepackage{tabularx}
\usepackage{colortbl}
\usepackage{hhline} 
\usepackage{svg}
\usepackage{appendix}
\usepackage{float}
\usepackage{apxproof}
\usepackage{xcolor}

\newcommand{\topt}{\theta_{\mathrm{opt}}}

\newcommand{\up}[1]{\overline{#1}}

\newcommand{\cross}{\alpha_2}
\newcommand{\rl}{\beta_{R,T}}
\newcommand{\loopt}{\beta_{\theta_\mathrm{opt}}^1}
\newcommand{\lo}{\beta_{\theta}^1}

\newcommand{\foi}{\alpha_1}

\newcommand{\post}{\tilde{\Theta}}
\newcommand{\diff}{\Theta_{\mathrm{diff}}}
\newcommand{\init}{\Theta}
\newcommand{\qmin}{q_{\mathrm{min}}}
\newcommand{\trel}{\mathcal{T}_{\beta_\theta^1}^{\mathrm{rel}}}
\newcommand{\tabs}{\mathcal{T}_{\beta_{\theta}^1}^{\mathrm{abs}}}
\newcommand{\mt}{\mathcal{T}}
\newcommand{\discot}{\theta_{\mathrm{disco}}}
\newcommand{\astar}{\tau}
\newcommand{\muex}{\mu_{\mathrm{ex}}}
\newcommand{\muheu}{\mu_{\mathrm{heu}}}
\newcommand{\execex}{t_{\mathrm{ex}}^{\mathrm{exec}}}
\newcommand{\execheu}{t_{\mathrm{heu}}^{\mathrm{exec}}}
\newcommand{\toptp}{\tilde{\theta}_{\mathrm{opt}}}

\newcommand{\gb}[1]{\SI{#1}{\giga\byte}}
\newcommand{\mhz}[1]{\SI{#1}{\mega\hertz}}
\newcommand{\ghz}[1]{\SI{#1}{\giga\hertz}}

\newcommand{\s}[1]{\SI{#1}{\second}}

\newcommand{\fn}{\mathcal{F}^\uparrow_0}
\newcommand{\f}{\mathcal{F}}
\newcommand{\fa}{\mathcal{F}^\uparrow}

\newcommand{\posPart}[1]{\left[#1\right]^+}
\newcommand{\verDev}[2]{v \! \left(#1, #2\right)}
\newcommand{\horDev}[2]{h \! \left(#1, #2\right)}

\sisetup{range-phrase=\,--\,}

\newtheorem{thm}{Theorem}

\newtheorem{defn}[thm]{Definition}
\newtheorem{rem}[thm]{Remark}
\newtheorem{lem}[thm]{Lemma}

\DeclareMathOperator*{\argmin}{arg\,min}

\allowdisplaybreaks

\setlength{\textfloatsep}{10pt plus 1.0pt minus 2.0pt}


\begin{document}

\title{Minimal Per-Flow Backlog Bounds \\ at an Aggregate FIFO Server \\under Piecewise-Linear Arrival Curves}

\titlerunning{Minimal FIFO Per-Flow Backlog Bounds under PWL Arrival Curves}

\author{Lukas Wildberger \and Anja Hamscher \and Jens B. Schmitt}

\authorrunning{L. Wildberger et al.}

\institute{DISCO Lab, RPTU Kaiserslautern-Landau, 67663 Kaiserslautern, Germany \\ 
\email{lukas.wildberger@cs.rptu.de}, \email{\{hamscher,jschmitt\}@cs.uni-kl.de}}

\maketitle           

\begin{abstract}
Network Calculus (NC) is a versatile methodology based on min-plus algebra to derive worst-case \emph{per-flow} performance bounds in networked systems with many concurrent flows. In particular, NC can analyze many scheduling disciplines; yet, somewhat surprisingly, an aggregate FIFO server is a notoriously hard case due to its min-plus \emph{non-linearity}. A resort is to represent the FIFO residual service by a family of functions with a free parameter instead of just a single curve. For simple token-bucket arrival curves, literature provides optimal choices for that free parameter to minimize delay and backlog bounds.
In this paper, we tackle the challenge of more general arrival curves than just token buckets. In particular, we derive residual service curves resulting in minimal backlog bounds for general piecewise-linear arrival curves. To that end, we first show that a backlog bound can always be calculated at a breakpoint of either the arrival curve of the flow of interest or its residual service curve. Further, we define a set of curves that characterize the backlog for a fixed breakpoint, depending on the free parameter of the residual service curve. We show that the backlog-minimizing residual service curve family parameter corresponds to the largest intersection of those curves with the arrival curve. In more complex scenarios finding this largest intersection can become inefficient as the search space grows in the number of flows. Therefore, we present an efficient heuristic that finds, in many cases, the optimal parameter or at least a close conservative approximation. This heuristic is evaluated in terms of accuracy and execution time. 
Finally, we utilize these backlog-minimizing residual service curves to enhance the DiscoDNC tool and observe considerable reductions in the corresponding backlog bounds.

\keywords{Network Calculus  \and FIFO Scheduling \and Backlog Bound}

\end{abstract}

\section{Introduction}

First-In First-Out (FIFO) is a popular scheduling policy for networked systems due to its simplicity and low cost of implementation. In various network analysis methods, it is an interesting policy to analyze. One such analysis method is Network Calculus (NC) \cite{cruz1991calc1}, which is a versatile methodology for deriving performance bounds in networked systems \cite{cruz1998sced+, le2001network, blanc2006quality}. In particular, we are interested in the derivation of backlog bounds. It is straightforward to obtain a tight backlog bound at a FIFO node when considering all incoming traffic as a aggregate flow \cite{le2001network}, such as in switches with a shared queue. An issue arises when we are instead interested in a \emph{per-flow} backlog bound, as is the case for separate FIFO queues like in input-buffered switches with virtual output queues (VOQ) \cite{mckeown2002achieving}, or big data processing such as in Hadoop, where a job queue is fed data from a distributed file system through switches \cite{shvachko2010hadoop}.

With NC, per-flow backlog bounds can be calculated using a \emph{residual service curve}, which represents the leftover service available to a specific flow of interest (foi). However, due to the min-plus non-linearity of FIFO-scheduled systems \cite{liebeherr2009system}, computing a residual service curve is hard. 
Defining a \emph{family of functions} with a free parameter $\theta$, rather than a single FIFO residual service curve, is a way of dealing with this. The family of FIFO residual service curves \cite{cruz1998sced+} is given by 
\begin{equation*}
    \lo(t)=\left[\beta(t) - \alpha_2(t-\theta) \right]^+ \wedge \delta_{\theta}(t), \textrm{with } \theta \geq 0,
\end{equation*}
where $\beta$ models the service available to the traffic aggregate and $\cross$ is an upper bound on the traffic of all the cross flows that are multiplexed into the aggregate together with the foi. Since each $\lo$ results in a backlog bound, the question about the optimal value of $\theta$ to find a minimal backlog bound arises. In fact, a closed form for $\theta$ exists when only considering token-bucket-constrained arrivals \cite{le2001network, blanc2006quality}. Yet, for more general functions such as piecewise-linear (PWL) curves, to the best of our knowledge there exist no per-flow backlog bound results in literature regarding the calculation of $\theta$ values.
Somewhat the only exception is \cite{cholvi2002worst}, where an output bound for PWL arrival curves has been derived, yet without using NC. Nevertheless, it is interesting as the burst term of an output bound is also a bound on the backlog. The analysis is however restricted to the case of a constant rate server, whereas we deal with PWL service curves.

To that end, we make the following contributions in this paper:
\begin{itemize}
    \item In Section~\ref{sec:backlog_exact}, we derive an exact method to find the backlog-minimizing $\theta$ value for PWL-constrained arrival and service curves.
    \item An efficient heuristic that determines the backlog-minimizing parameter $\theta$ in most cases is presented in Section~\ref{sec:heuristic}. We evaluate its accuracy and execution time in comparison to the exact method.
    \item Finally, we show in Section~\ref{sec:dblanalysis} that the parameter $\theta$, derived using our new methods, poses a significant accuracy improvement for backlog bounds computed by the DiscoDNC tool over its current default setting.
\end{itemize}

\section{Network Calculus Background} \label{sec:nc_background}

Let $\mathbb{R}^+$ be the set of non-negative real numbers. $\f:=\{f: \mathbb{R}^+ \rightarrow \mathbb{R}\cup\{+\infty\}\}$ is the set of (min, plus) functions. Based on $\f$, we let $\fa$ be the set of non-decreasing functions $f\in\mathcal{F}$, and $\fn$ be the set of functions in $\fa$ with $f(0)=0$.

\begin{defn} [Basic Operators \cite{bouillard2018deterministic}] \label{def:operators}
    Let $f, g\in\f$. The min-plus convolution of $f$ and $g$ is defined as $f\otimes g(t)\coloneqq \inf_{0\leq s\leq t}\{f(t-s) + g(s)\}$. The (max-plus) deconvolution is defined as as $f \up{\oslash} g(t) \coloneqq \inf_{s \geq 0}\{f(t+s)-g(s)\}$.
\end{defn}

The \textit{impulse function} $\delta_T(t)$ is defined as $\delta_T(t)=\infty$, if $t > T$ and $0$ otherwise. The \textit{indicator function} $\mathbbm{1}_{A}$ is defined as $\mathbbm{1}_{A} = 1$, if $A$ is true and $0$ otherwise. 
For a given $\beta \in \mathcal{F}$, the \textit{lower non-decreasing closure} is defined as the largest non-decreasing function with $\beta_\downarrow \leq \beta$, given by $\beta_\downarrow \coloneqq \beta\up{\oslash}0$ \cite[p.~107]{bouillard2018deterministic}.

\begin{defn} [Pseudo-Inverse \cite{bouillard2018deterministic}] \label{def:pseudo_inverse}
    Let $f \in \mathcal{F}$ be a non-negative and non-decreasing function. The pseudo-inverse $f^{-1}$ is for all $x \in \mathbb{R}^+$ given by
    \begin{align*}
        f^{-1}(x) = \inf\{t \ | \ f(t) \geq x\}.
    \end{align*}
\end{defn}

Next, we define various notions that are used to model a network and derive its performance bounds. Let $A, D\in\fn$ be the \emph{cumulative arrival} and \emph{departure process} of a flow in the network, assuming causality $A\geq D$. Furthermore, we assume the system to be lossless. 
We define the most important performance measures for such a system:

\begin{defn} [Virtual Delay at Time $t$]
 	\label{def:delay}
 	The \emph{virtual delay} of data arriving at system $\mathcal{S}$ at time $t$ is the time until this data would be served, assuming FIFO-per-flow  order of service,
 	\begin{align}
 	    d_{A,D}(t) = \inf\{d \geq 0 : A(t) \leq D(t+d) \} = D^{-1}(A(t))-t \label{eq:virtual-delay}
 	\end{align}
\end{defn}

\begin{defn} [Backlog at Time $t$] \label{def:backlog}
	The \emph{backlog} of system $\mathcal{S}$ at time $t$ is the vertical
 	distance between arrival process $A$ and departure process $D$
 	at time $t$,
 	\begin{equation}
 	q_{A,D}(t) \coloneqq A(t)-D(t).
 	\label{eq:backlog}
 	\end{equation}
 \end{defn}

Arrival and service curves are central for the performance analysis using NC. 

\begin{defn} [Arrival Curve]
	\label{def:arrival-curve}
	Let $\alpha \in\fn$. We say that $\alpha$ is an \emph{arrival curve} for arrival process $A$ if it holds for all $0\leq s\leq t$ that
	\begin{equation*}
		A(t) - A(s) \leq \alpha(t-s)
		\Longleftrightarrow A = A \otimes \alpha. 
	\end{equation*}
\end{defn}
An example is a \emph{token-bucket}  arrival curve $ \gamma_{r,b}(t) = b+rt $ if $t > 0$, $\gamma_{r,b}(0)=0$. 

\begin{defn} [Service Curve] \label{def:service-curve}
    Let a flow with arrival process $A$ and departure process $D$ traverse a system $\mathcal{S}$. The system offers a \emph{min-plus service curve} $\beta$ to the flow if $\beta \in \f$ and it holds for all $t \geq 0$ that \vspace*{-1mm}
	\begin{equation*}
		D(t) \geq A \otimes \beta \ (t) = \inf_{0\leq s\leq t} \left\{A(t-s) + \beta(s)\right\}.
	\end{equation*}
\end{defn}\vspace*{-1mm}
An example is a \emph{rate-latency}  curve $ \beta_{R,T}(t) \coloneqq R\cdot \posPart{t-T}$,  $\posPart{x}\coloneqq \max\{x,0\}$.
We define two characteristic distances between functions.

\begin{defn} \label{def:horizontal_deviation}
    Let $f,g \in \mathcal{F}$. The horizontal deviation between $f$ and $g$ is
    \begin{align*}
        h(f,g)  &\coloneqq \sup_{t \geq 0} \{\inf \{d \geq 0 \ | \ f(t) \leq g(t+d) \}\} 
    \end{align*}
    and the \emph{vertical deviation} between $f$ and $g$ is
	\begin{equation*}
		\verDev{f}{g} \coloneqq \sup_{t\geq0}\left\{ f(t)-g(t)\right\}. 
	\end{equation*}
\end{defn}
Using these concepts, one can derive a backlog bound \cite[p.~115]{bouillard2018deterministic},\cite[p.~118]{le2001network}. 

\begin{thm} [Backlog Bound] \label{thm:performancebounds}
    Assume a single flow with arrival process $A$, with arrival curve $\alpha\in\fn$, and departure process $D$ traverses a system $\mathcal{S}$. Let the system $\mathcal{S}$ offer a service curve $\beta\in\fn$. 
    The backlog $q(t)$ satisfies for all t 
    \begin{equation*}\label{eq:backlogbound}
        q_{A,D}(t)\leq \verDev{\alpha}{\beta}. 
    \end{equation*}
\end{thm}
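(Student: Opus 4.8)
The plan is to bound the backlog pointwise by chaining the two defining inequalities -- the service curve lower bound on $D$ and the arrival curve upper bound on the increments of $A$ -- so that the arrival process cancels and only $\alpha$ and $\beta$ remain.

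First I would start from the definition $q_{A,D}(t) = A(t) - D(t)$ and invoke Definition~\ref{def:service-curve} to replace $D(t)$ by its lower bound $\inf_{0 \leq s \leq t}\{A(t-s) + \beta(s)\}$. Subtracting an infimum is the same as adding the supremum of the negated terms, and since $A(t)$ is constant with respect to the supremum index I can pull it inside to obtain
\begin{equation*}
    q_{A,D}(t) \leq \sup_{0 \leq s \leq t}\bigl\{A(t) - A(t-s) - \beta(s)\bigr\}.
\end{equation*}

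Next I would apply the arrival curve property from Definition~\ref{def:arrival-curve} to the pair of times $(t-s, t)$, namely $A(t) - A(t-s) \leq \alpha(s)$, so that every summand is bounded above by $\alpha(s) - \beta(s)$. Finally, enlarging the range of the supremum from $[0,t]$ to all of $\mathbb{R}^+$ can only increase its value and produces exactly $\verDev{\alpha}{\beta}$, which completes the bound.

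The argument is short and each step is either an identity or a one-sided estimate in the favorable direction, so I do not expect a genuine obstacle. The only point requiring a little care is the passage from the convolution infimum to the supremum together with the index shift inside the arrival curve inequality: one must verify that $\alpha$ is evaluated at the increment $t - (t-s) = s$ (rather than at $t-s$), so that it matches the summand correctly, and that the final relaxation of the supremum range is legitimate precisely because we only seek an upper bound.
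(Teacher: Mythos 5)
Your proof is correct: the paper itself does not prove this theorem but cites it as a standard background result (from Bouillard et al.\ and Le~Boudec--Thiran), and your argument is exactly the canonical one found in those references --- bound $D(t)$ below by the convolution, rewrite the difference as a supremum, apply the arrival-curve property at the increment $s$, and enlarge the supremum range to get $\verDev{\alpha}{\beta}$. No gaps; the one point you flagged (that $\alpha$ must be evaluated at $s$, not $t-s$) is indeed the only place where care is needed, and you handled it correctly.
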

A FIFO residual service curve can be calculated as follows.
\begin{thm} [Residual Service Curve for FIFO \cite{cruz1998sced+}] \label{thm:FIFO_leftover_service_curve}
    Let $t \geq 0$. Consider a system $\mathcal{S}$ that multiplexes two flows $f_1$ and $f_2$ using FIFO scheduling. The arrivals of $f_2$, $A_2$, are constrained by $\alpha_2$. Further, assume that $\mathcal{S}$ guarantees a service curve $\beta$ to the aggregate of the flows. Then, for any $\theta \geq 0$, the residual service of $f_1$ is
    \begin{align} \label{Eq:FIFO_leftover_service_curve}
        \beta_{\theta}^1(t) &= [\beta(t) - \alpha_2(t-\theta)]^+ \wedge \delta_{\theta}(t) 
    \end{align}
\end{thm}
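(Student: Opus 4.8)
The plan is to prove the residual bound directly from the min-plus service-curve definition, i.e.\ to show that for every fixed $t\geq 0$,
\[
D_1(t)\;\geq\;\inf_{0\leq u\leq t}\{A_1(t-u)+\lo(u)\}=\conv{A_1}{\lo}{t},
\]
where $A_1,D_1$ (resp.\ $A_2,D_2$) denote the cumulative arrivals and departures of $f_1$ (resp.\ $f_2$), and $A=A_1+A_2$, $D=D_1+D_2$ are the aggregate processes. Since the infimum is a lower bound for each of its terms, it suffices to exhibit, for each $t$, one index $u^\ast\in[0,t]$ satisfying $D_1(t)\geq A_1(t-u^\ast)+\lo(u^\ast)$.

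First I would isolate the two elementary consequences of FIFO order that carry the whole argument. Because bits leave the aggregate in their order of arrival, for any $\tau\leq t$ one has the \emph{FIFO lower bound} $A(\tau)\leq D(t)\Rightarrow D_i(t)\geq A_i(\tau)$ (every flow-$i$ bit arriving by $\tau$ is among the first $D(t)$ bits, hence already gone) and the \emph{FIFO upper bound} $A(\tau)\geq D(t)\Rightarrow D_i(t)\leq A_i(\tau)$ (the $D(t)$ departed bits all arrived by $\tau$, so the flow-$i$ departures form a subset of flow-$i$ arrivals up to $\tau$). Establishing these two monotone bit-counting inequalities rigorously---particularly the upper bound, which is exactly where FIFO (as opposed to arbitrary aggregate scheduling) is used and where the free parameter $\theta$ will enter---is the technical heart of the proof.

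Then I would dispose of the trivial regime $t\leq\theta$, where $\lo(t)=0$ and the choice $u^\ast=t$ already gives $\conv{A_1}{\lo}{t}\leq A_1(0)=0\leq D_1(t)$. For $t>\theta$ I would split on the aggregate FIFO relation at $\theta$. In the case $D(t)>A(t-\theta)$, the FIFO lower bound with $\tau=t-\theta$ yields $D_1(t)\geq A_1(t-\theta)$; since $\lo(\theta)=\posPart{\beta(\theta)-\cross(0)}\wedge\delta_\theta(\theta)=0$, the choice $u^\ast=\theta$ closes this case. In the complementary case $A(t-\theta)\geq D(t)$, I would invoke the service-curve property of $\beta$ for the aggregate to pick a (near-)minimizing $s^\ast$ with $D(t)\geq A_1(t-s^\ast)+A_2(t-s^\ast)+\beta(s^\ast)$, apply the FIFO upper bound with $\tau=t-\theta$ to get $D_2(t)\leq A_2(t-\theta)$, and use the arrival curve together with monotonicity of $\cross$ in the form $A_2(t-s^\ast)-A_2(t-\theta)\geq-\cross(s^\ast-\theta)$ to obtain $D_1(t)\geq A_1(t-s^\ast)+\beta(s^\ast)-\cross(s^\ast-\theta)$.

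Finally I would reconcile this estimate with the positive part and the $\delta_\theta$ cap at $u^\ast=s^\ast$. When $s^\ast>\theta$ and $\beta(s^\ast)-\cross(s^\ast-\theta)\geq 0$, the estimate is already $A_1(t-s^\ast)+\lo(s^\ast)$; otherwise $\lo(s^\ast)=0$ and it remains only to note $D_1(t)\geq A_1(t-s^\ast)$, which follows from the FIFO lower bound because the aggregate guarantee forces $A(t-s^\ast)\leq A(t-s^\ast)+\beta(s^\ast)\leq D(t)$. In each case $D_1(t)\geq A_1(t-u^\ast)+\lo(u^\ast)\geq\conv{A_1}{\lo}{t}$, completing the proof. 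I expect the only genuinely delicate points to be the clean formalization of the two FIFO bit-counting bounds and the handling of the infimum (attained versus approached through an $\varepsilon$-argument).
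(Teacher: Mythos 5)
The paper itself gives no proof of this theorem: it is imported background, cited from \cite{cruz1998sced+} (the classical FIFO residual-service result, also found as Proposition~6.2.1 in Le~Boudec--Thiran), and the paper's appendix only proves its own lemmas from Sections~3--5. So the comparison can only be against the classical argument, and your proposal is a correct, complete reconstruction of it. The two bit-counting implications you isolate, $A(\tau)\le D(t)\Rightarrow D_i(t)\ge A_i(\tau)$ and $A(\tau)\ge D(t)\Rightarrow D_i(t)\le A_i(\tau)$, are exactly the standard formalization of FIFO order, and your case split is the same skeleton as in the original proofs: $t\le\theta$ is trivial; for $t>\theta$, either $D(t)>A(t-\theta)$, handled with $u^*=\theta$ and the FIFO lower bound, or $A(t-\theta)\ge D(t)$, handled with the (near-)minimizer $s^*$ of the aggregate guarantee, the FIFO upper bound $D_2(t)\le A_2(t-\theta)$, and the arrival-curve bound on $A_2(t-\theta)-A_2(t-s^*)$. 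Your final reconciliation is also airtight: whenever $s^*\le\theta$ or $\beta(s^*)<\alpha_2(s^*-\theta)$ you fall back on $\beta_\theta^1(s^*)=0$ and the FIFO lower bound via $D(t)\ge A(t-s^*)+\beta(s^*)\ge A(t-s^*)$, so the convention for $\alpha_2$ at negative arguments never matters. The only points that need care in a fully formal write-up are the two you flag yourself: non-attainment of the infimum (your inequalities degrade gracefully to an $\varepsilon$-version, e.g.\ $A(\tau)\le D(t)+\varepsilon\Rightarrow D_i(t)\ge A_i(\tau)-\varepsilon$, and letting $\varepsilon\to0$ closes the argument) and the tie-breaking of simultaneous arrivals in the bit-level FIFO model, a caveat equally present in the literature's proofs.
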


\begin{defn} [PWL Concave Normal Form \cite{bouillard2018deterministic}] \label{def:concave_piecewise_linear_normal_form}
    Let $r_i$, $b_i$ $\in \mathbb{R}^+$ and set $\gamma_i=\gamma_{r_i,b_i}$. The piecewise linear concave function 
    $f = \min \{\gamma_i\}$
    is said to be in normal form, if $\gamma_i$ are sorted by a decreasing rate and no $\gamma_i$ can be removed without modifying the minimum:
    \begin{align}
        &i < j \Rightarrow r_i > r_j, \label{eq:concp1} \\
        &\forall i, \exists t > 0, \forall j \neq i, \gamma_i(t) < \gamma_j(t) \label{eq:concp2}.
    \end{align}
    If $f = \min\{\gamma_i\}, i \in \{1,\dots,n\}$ is in normal form, then there is a sequence of $a_{i}$ of respective intersections of the linear functions $\gamma_i$ and $\gamma_{i+1}$. These intersections, denoted by $a_i$, are also called \textit{breakpoints} of $f$.
\end{defn}

\begin{defn} [PWL Convex Normal Form \cite{bouillard2018deterministic}] \label{def:convex_piecewise_linear_normal_form}
    Let $R_i$, $T_i$ $\in \mathbb{R}^+$ and set $\beta_i=\beta_{R_i,T_i}$. The PWL convex function
    $f = \max \{\beta_i\}$
    is said to be in normal form, if $\beta_i$ are sorted by an increasing rate and no $\beta_i$ can be removed without modifying the maximum:
    \begin{align*}
        &i < j \Rightarrow R_i < R_j \ \land \ \forall i, \exists t > 0, \forall j \neq i, \beta_i(t) > \beta_j(t).
    \end{align*}
    If $f = \max\{\beta_i\}, i \in \{1,\dots,n\}$ is in normal form, then there is a sequence of $s_{i}$ of respective intersections of the linear functions $\beta_i$ and $\beta_{i+1}$. These intersections, denoted by $s_i$, are also called \textit{breakpoints} of $f$.
\end{defn}

\begin{defn} \label{def:linear_segment_at_time_t}
    Let the linear segment of a given PWL (concave or convex) function $f$ at time $t$ be called $f^t$, with $f^t = \beta_{R,T}$ or $f^t = \gamma_{r,b}$. The rate, $r$ or $R$, and the y-axis intercept $b$ or x-axis intercept $T$ of this linear segment is then also referred to as $r^t, R^t, b^t, T^t$. 
\end{defn}

\begin{defn} \label{def:a_star_and_s_star}
    Let $\alpha$ be a PWL concave curve in normal form and $\beta$ a PWL convex curve in normal form. Let $A$ and $B$ be the sets of breakpoints of $\alpha$ and $\beta$, respectively.
    Let $I_A$ be the set of all breakpoints mapped to $\alpha$, defined as $I_A \coloneqq A \ \cup \ \{ \alpha^{-1}(\beta(s)) \ | \ s \in B \}$.
    Then the first point in time for which the corresponding rate of $\alpha$ is less than or equal to the corresponding rate of $\beta$, mapped to $\alpha$, is called $a^*_{\alpha,\beta}$. We define $a^*_{\alpha,\beta}$ as follows: 
    \begin{align*}
        a^*_{\alpha,\beta} &\coloneqq \min\{ i \in I_A : r^i \leq R^{\beta^{-1}(\alpha(i))}\}.
    \end{align*}
\end{defn}

\section{System Model}\label{sec:sysmod}

In this section, we introduce the system setting considered in this paper. 
We also discuss the challenge associated with this setting and how to deal with it. 
\begin{figure}
    \centering
    \includegraphics{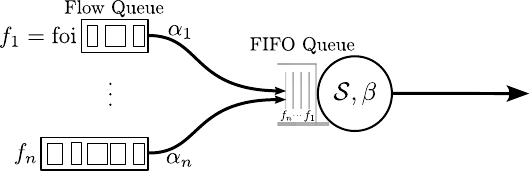}
    \caption{Distributed per-flow queues served by single FIFO system.}
    \label{fig:system_model}
\end{figure}

Fig.~\ref{fig:system_model} illustrates the system model under investigation.
Here, each (distributed) flow  (or job) $f_1, \ldots, f_n$ is associated with its own flow queue, each with its own size. 
These flows send their requests into a single (task) queue, which ensures the FIFO order between the individual flows when processed by system $\mathcal{S}$, for which we know a service curve $\beta$. 
The central question of our work now is: how to  appropriately size the individual flow queues, or, in other words, how to calculate per-flow backlog bounds.

We point out that per-flow queues in total generally require more space than an aggregate shared queue, i.e., we incur a \emph{segregation penalty}. The extent of that penalty needs to weighed against advantages from the distributed setting as given in Fig.~\ref{fig:system_model}. We come back to that issue in Section~\ref{sec:dblanalysis}.

Throughout the following sections, we assume that the flows $f_1 = \text{foi}, \ldots, f_n$ are constrained by PWL concave arrival curves $\alpha_1, \ldots, \alpha_n \in \mathcal{F}$. We aggregate all cross flows arrival curves $\alpha_2, \ldots , \alpha_n$, and simply call it $\alpha_2$. The system $\mathcal{S}$ that multiplexes the flows according to FIFO, offers a PWL convex service curve $\beta \in \mathcal{F}$ to the flow aggregate.

When applying NC to compute per-flow backlog bounds under PWL arrival and service curves, 
an issue arises: the residual service curve $\lo$ may contain a finite number of segments with negative slopes, which violates the non-decreasing property. While some literature formally requires that $\lo \in \fn$ \cite{le2001network}, this constraint is not always reflected in the conditions of the corresponding theorems. 
Anyway, the following lemma states that this violation does not impact the calculation of the vertical deviation (which we need to compute backlog bounds), since we can make use of the lower non-decreasing closure.

\begin{lem} 
    Let $\alpha \in \mathcal{F}$ and $\beta \in \mathcal{F}$ be given. Then it holds that
    \begin{align}
        v(\alpha,\beta) = v(\alpha, \beta_\downarrow). \label{Eq:vertical_dev_with_LNDC}
    \end{align}
\end{lem}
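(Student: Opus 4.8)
The plan is to prove the two inequalities $\verDev{\alpha}{\beta_\downarrow} \geq \verDev{\alpha}{\beta}$ and $\verDev{\alpha}{\beta_\downarrow} \leq \verDev{\alpha}{\beta}$ separately. The first direction is immediate from the definition of the lower non-decreasing closure: since $\beta_\downarrow$ is the largest non-decreasing function with $\beta_\downarrow \leq \beta$, we have $\beta_\downarrow(t) \leq \beta(t)$ for every $t$, hence $\alpha(t) - \beta_\downarrow(t) \geq \alpha(t) - \beta(t)$ pointwise. Taking the supremum over $t \geq 0$ on both sides yields $\verDev{\alpha}{\beta_\downarrow} \geq \verDev{\alpha}{\beta}$.

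For the reverse inequality I would first make the closure explicit. Using $\beta_\downarrow = \beta \up{\oslash} 0$ together with the deconvolution of Definition~\ref{def:operators}, and noting that the zero function vanishes, I obtain the representation $\beta_\downarrow(t) = \inf_{s \geq 0}\beta(t+s) = \inf_{u \geq t}\beta(u)$. Substituting this into the vertical deviation and pulling the constant $\alpha(t)$ inside the infimum (so that $-\inf$ of the values becomes $\sup$ of their negatives), the deviation rewrites as a supremum over admissible pairs:
\[
    \verDev{\alpha}{\beta_\downarrow} = \sup_{t \geq 0}\Bigl\{\alpha(t) - \inf_{u \geq t}\beta(u)\Bigr\} = \sup_{t \geq 0}\ \sup_{u \geq t}\bigl\{\alpha(t) - \beta(u)\bigr\}.
\]
The crucial step is now to exploit that $\alpha$ is non-decreasing (as an arrival curve, $\alpha \in \fn$): for any $u \geq t$ we have $\alpha(t) \leq \alpha(u)$, and therefore $\alpha(t) - \beta(u) \leq \alpha(u) - \beta(u) \leq \verDev{\alpha}{\beta}$. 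Since every term of the double supremum is bounded by $\verDev{\alpha}{\beta}$, taking the supremum gives $\verDev{\alpha}{\beta_\downarrow} \leq \verDev{\alpha}{\beta}$, and combining the two inequalities establishes the equality.

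I expect the main obstacle to be conceptual rather than computational: the monotonicity of $\alpha$ is genuinely necessary, not merely convenient. Without it the equality fails — a downward spike in $\alpha$ placed over a dip of $\beta$ that the closure fills in can make $\verDev{\alpha}{\beta_\downarrow}$ strictly larger — so I would take care to invoke $\alpha \in \fn$ explicitly at the single point where it is used. The remaining care concerns the extended-real-valued setting: I should confirm that the infimum/supremum manipulations and the identity $\beta_\downarrow(t) = \inf_{u \geq t}\beta(u)$ stay valid when $\beta$ (or the deviation itself) takes the value $+\infty$. This is routine, since the order and arithmetic conventions on $\mathbb{R}\cup\{+\infty\}$ respect the pointwise bound $\beta_\downarrow \leq \beta$ that underlies both directions of the argument.
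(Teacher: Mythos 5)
Your proof is correct and, at its core, it is the same computation as the paper's: unfold the closure via $\beta_\downarrow = \beta \up{\oslash} 0$ into $\beta_\downarrow(t) = \inf_{u \geq t}\beta(u)$, turn $\verDev{\alpha}{\beta_\downarrow}$ into a supremum of $\alpha(t) - \beta(u)$ over pairs $u \geq t$, and bound it by $\verDev{\alpha}{\beta}$. The packaging differs: you split the equality into two inequalities (the easy one from $\beta_\downarrow \leq \beta$) and invoke the monotonicity of $\alpha$ once, directly, whereas the paper writes a chain of equalities through shift operators, $\sup_{s \geq 0} \verDev{\alpha}{\beta \otimes \delta_{-s}} = \sup_{s \geq 0} \verDev{\alpha \otimes \delta_{s}}{\beta} = \verDev{\alpha}{\beta}$, with the supremum attained at $s = 0$. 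Your version is the more transparent one on the point that actually matters: as you observe, monotonicity of $\alpha$ is indispensable, and the hypothesis ``$\alpha \in \f$'' in the statement is literally too weak. For instance, $\alpha(t) = \mathbbm{1}_{\{t \leq 1\}}$ and $\beta(t) = \mathbbm{1}_{\{t < 2\}}$ give $\verDev{\alpha}{\beta} = 0$, while $\beta_\downarrow \equiv 0$ and hence $\verDev{\alpha}{\beta_\downarrow} = 1$ --- confirming your counterexample sketch. In the paper's proof this hypothesis is consumed silently in the step $\verDev{\alpha}{\beta \otimes \delta_{-s}} = \verDev{\alpha \otimes \delta_{s}}{\beta}$, which transfers a left-shift of $\beta$ into a right-shift of $\alpha$ and is valid only because $\alpha \otimes \delta_{s}$ coincides with the right-translate of $\alpha$ when $\alpha$ is non-decreasing (in context $\alpha$ is an arrival curve, so $\alpha \in \fn$ and the lemma is applied legitimately throughout the paper). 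So: same route, but your explicit appeal to $\alpha \in \fn$ at the single point where it is needed is a genuine gain in rigor over both the stated hypothesis and the paper's own write-up.
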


The proof for this and all subsequent lemmas can be found in the Appendix. 

\section{Derivation of the $\theta$ Parameter for Minimal Per-Flow Backlog Bounds}\label{sec:backlog_exact}

In this section, we derive the value $\topt$ that minimizes per-flow backlog bounds.
In order to derive such a value of $\theta$, we begin by formulating the problem mathematically as
\begin{align} \label{Eq:theta_opt_backlog_bound_main_goal}
    \topt = \argmin_{\theta \geq 0} \{ \verDev{\alpha_1}{\beta_{\theta}^1} \}.
\end{align}
In the following, we will transform this problem step by step until we are able to determine a solution. At first, in order to constrain the choice of $\theta$, we make use of the following lemma:

\begin{lem} \label{lem:theta<verDiv}
    Let $\alpha_1$ and $\alpha_2$ be PWL concave arrival curves and let $\beta$ be a PWL convex service curve under FIFO multiplexing. Let $h(\alpha_2,\beta)$ be the horizontal deviation of $\alpha_2$ and $\beta$. Then, it holds for $0 \leq \theta \leq h(\alpha_2,\beta)$ that
    \begin{align*}
        v(\alpha_1, \beta_{\theta}^1) \geq v(\alpha_1, \beta_{h(\alpha_2,\beta)}^1).
    \end{align*}
\end{lem}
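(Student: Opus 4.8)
Set $h \coloneqq h(\alpha_2,\beta)$ and recall that, because the impulse term $\delta_\theta$ forces the residual curve to vanish up to $\theta$, we have $\beta_\theta^1(t)=0$ for $t\le\theta$ and $\beta_\theta^1(t)=\posPart{\beta(t)-\alpha_2(t-\theta)}$ for $t>\theta$. The plan is \emph{not} to try to prove a pointwise bound $\beta_\theta^1\le\beta_h^1$ on all of $\mathbb{R}^+$: on the interval $(\theta,h)$ the curve $\beta_\theta^1$ can in fact exceed $\beta_h^1$, so the inequality there goes the wrong way, and this is the main subtlety. Instead I would show that the supremum defining $v(\alpha_1,\beta_h^1)$ is already attained on the tail $\{t\ge h\}$, and that on that tail $\beta_\theta^1\le\beta_h^1$ does hold.

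First I would record the defining property of the horizontal deviation. Writing $d(t)=\inf\{d\ge0 : \alpha_2(t)\le\beta(t+d)\}$ we have $\alpha_2(t)\le\beta(t+d(t))$, and since $\beta$ is non-decreasing and $d(t)\le h$, this gives $\alpha_2(s)\le\beta(s+h)$ for every $s\ge0$. Setting $s=t-h$ yields $\beta(t)-\alpha_2(t-h)\ge0$ for all $t\ge h$, so at $\theta=h$ the positive part is inactive and $\beta_h^1(t)=\beta(t)-\alpha_2(t-h)$ on $[h,\infty)$, while $\beta_h^1\equiv0$ on $[0,h]$ by the $\delta_h$ term.

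Next I would localize the vertical deviation at $\theta=h$. On $[0,h]$ we have $\alpha_1-\beta_h^1=\alpha_1$, which is non-decreasing and hence maximized at $t=h$; since $\beta_h^1(h)=0$, the value there is $\alpha_1(h)$, and $t=h$ already belongs to $[h,\infty)$. Consequently the global supremum is attained on the tail, i.e. $v(\alpha_1,\beta_h^1)=\sup_{t\ge h}\{\alpha_1(t)-\beta_h^1(t)\}$.

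Finally I would establish the tail domination. For $t\ge h\ge\theta$ no $\delta$-clipping occurs (as $t>\theta$), and $t-\theta\ge t-h\ge0$ gives $\alpha_2(t-\theta)\ge\alpha_2(t-h)$ by monotonicity; applying the non-decreasing map $\posPart{\cdot}$ to $\beta(t)-\alpha_2(t-\theta)\le\beta(t)-\alpha_2(t-h)$ yields $\beta_\theta^1(t)\le\beta(t)-\alpha_2(t-h)=\beta_h^1(t)$. Chaining the three facts gives
\[
v(\alpha_1,\beta_h^1)=\sup_{t\ge h}\{\alpha_1(t)-\beta_h^1(t)\}\le\sup_{t\ge h}\{\alpha_1(t)-\beta_\theta^1(t)\}\le v(\alpha_1,\beta_\theta^1),
\]
which is the claim. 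The only genuine obstacle is the localization step: one must avoid comparing the curves on $[0,h]$, where $\beta_\theta^1$ may dominate $\beta_h^1$, and instead argue that $v(\alpha_1,\beta_h^1)$ is insensitive to that region. I also note that the argument uses only monotonicity of $\alpha_1,\alpha_2,\beta$ together with the defining property of $h$, and not the PWL concavity/convexity hypotheses, so the statement holds somewhat more generally than advertised.
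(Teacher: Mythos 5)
Your architecture (localize the supremum of $\alpha_1-\beta_h^1$ to the tail $[h,\infty)$, then compare pointwise there, where $h \coloneqq h(\alpha_2,\beta)$) has a genuine gap at the single point $t=h$ — and that is exactly the point your localization relies on. First, note that your step 1 is self-contradictory at $t=h$: the impulse term gives $\beta_h^1(h)=0$ (since $\delta_h(h)=0$), while your formula $\beta(t)-\alpha_2(t-h)$ gives $\beta(h)$ there; the impulse wins, so $\beta_h^1(h)=0$. Your localization then correctly yields $v(\alpha_1,\beta_h^1)=\sup_{t\ge h}\{\alpha_1(t)-\beta_h^1(t)\}$, and this supremum contains the term $\alpha_1(h)-\beta_h^1(h)=\alpha_1(h)$. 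But your step 3, the claim $\beta_\theta^1(t)\le\beta_h^1(t)$ for all $t\ge h$, is false at $t=h$: for $\theta<h$ the curve $\beta_\theta^1$ is not clipped by any impulse at $t=h$, so $\beta_\theta^1(h)=\posPart{\beta(h)-\alpha_2(h-\theta)}$ can be strictly positive while $\beta_h^1(h)=0$. Concretely, take $\beta(t)=\max\{t,\,10(t-9)\}$ and $\alpha_2=\gamma_{2,1}$. The horizontal distance $1+s$ grows until $\alpha_2$ reaches the breakpoint height $10$ of $\beta$ at $s_0=4.5$, so $h=5.5$. With $\theta=5.4$ one gets $\beta_\theta^1(5.5)=\posPart{5.5-\alpha_2(0.1)}=\posPart{5.5-1.2}=4.3>0=\beta_h^1(5.5)$. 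Hence the first inequality in your final chain is unjustified: the left supremum contains $\alpha_1(h)$, whereas the right supremum's value at $t=h$ is only $\alpha_1(h)-4.3$, and nothing you have proved shows the right supremum still reaches $\alpha_1(h)$.

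The missing ingredient is precisely the statement $v(\alpha_1,\beta_\theta^1)\ge\alpha_1(h)$, i.e., that $\beta_\theta^1$ vanishes somewhere at or beyond $h$; this is not a boundary technicality but the mathematical core of the lemma. (In the example above, $\beta_{5.4}^1$ drops back to $0$ on $[9.8,\,10.025]$, which is what rescues the inequality.) One can repair your proof as follows: the horizontal deviation is attained or approached at some $s_0\ge0$; if $s_0>0$ then $\beta(s_0+h)\le\alpha_2(s_0)\le\alpha_2(s_0+h-\theta)$, so $\beta_\theta^1(s_0+h)=0$ with $s_0+h\ge h$; if $s_0=0$ then $\beta(h)\le\alpha_2(0^+)$, so $\beta_\theta^1(h)=0$ itself. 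This "last zero of $\beta_\theta^1$ lies at or after $h$" fact is exactly what the paper's proof encodes: it passes to lower non-decreasing closures via Eq.~\eqref{Eq:vertical_dev_with_LNDC}, writes the closure as $\beta_\theta^1\cdot\mathbbm{1}_{\{t\ge z(\theta)\}}$ where $z(\theta)$ is the last intersection of $\alpha_2(\cdot-\theta)$ with $\beta$, and uses $z(\theta)\ge z(h)\ge h\ge\theta$; the closure erases the non-monotone dip of both residual curves on $[h,z(\cdot)]$, which is exactly the region (and point) where your pointwise comparison breaks. Finally, your closing remark that the lemma needs only monotonicity and not the concave/convex PWL structure should be retracted, or at least re-examined: the repair uses attainment of the horizontal deviation and the structure of the zero set of $\beta(\cdot)-\alpha_2(\cdot-\theta)$, both of which lean on that structure.
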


Note that for token-bucket arrival curves and rate-latency service curves, the result of Lem.~\ref{lem:theta<verDiv} has already been stated in \cite{lenzini2005delay, bouillard2018deterministic, le2001network, blanc2006quality}.

Based on Lem.~\ref{lem:theta<verDiv}, we conclude that the search space for $\theta$ can be restricted to $\theta \geq \horDev{\alpha_2}{\beta}$. This allows us to transform the problem in Eq.~\eqref{Eq:theta_opt_backlog_bound_main_goal} to
\begin{align*}
    \topt = \argmin_{\theta \geq h(\alpha_2,\beta)} \{ \verDev{\alpha_1}{\beta_{\theta}^1} \}.
\end{align*}

To proceed, we define the family of functions $v_t(\theta)$ as the vertical distances between $\foi$ and $\lo$ at an arbitrary but fixed time $t$:
\begin{align*}
        v_t(\theta) \coloneqq
        &\begin{cases}
            \alpha_1(t) - \beta_{\theta}^1(t), & \hspace{4.7em} \mbox{if } h(\alpha_2,\beta) \leq \theta < t, \\ 
            \alpha_1(\theta), & \hspace{4.7em} \mbox{if } \theta \geq t > h(\alpha_2,\beta),
        \end{cases} \\
        = 
        &\begin{cases}
            \alpha_1(t) - \beta(t) + \alpha_2(t-\theta), & \mbox{if } h(\alpha_2,\beta) \leq \theta < t, \\ 
            \alpha_1(\theta), & \mbox{if } \theta \geq t > h(\alpha_2,\beta).
        \end{cases}
    \end{align*}
Here, we have used that $\beta_{\theta}^1(t) = \beta(t) + \alpha_2(t-\theta)$, for $h(\alpha_2,\beta) \leq \theta < t$. Particularly, by definition we have $\beta_{\theta}^1 = [\beta(t) - \alpha_2(t-\theta)]^+ \wedge \delta_{\theta}(t)$. 
Since for $\theta<t$ it holds that $\delta_{\theta}(t) = 0$
, and because $h(\alpha_2, \beta) \leq \theta$ ensures that $\beta(t) \geq \alpha_2(t-\theta)$ for all $t$ (according to Def.~\ref{def:horizontal_deviation}), both the positive part and $\delta_{\theta}(t)$ can be omitted. 

With this, the original problem in Eq.~\eqref{Eq:theta_opt_backlog_bound_main_goal} can now be rewritten as:
\begin{align}
    \topt &=  \argmin_{\theta \geq h(\alpha_2,\beta)} \{ \verDev{\alpha_1}{\beta_{\theta}^1} \} \notag \\
    &= \argmin_{\theta \geq h(\alpha_2,\beta)} \{ \sup_{t \geq 0} \{ \alpha_1(t) - \beta_{\theta}^1(t) \} \} \notag \\
    &= \argmin_{\theta \geq h(\alpha_2,\beta)} \{ \sup_{t \geq 0} \{v_t(\theta)\} \} \label{eq:problem_with_sup}
\end{align}

Here, the supremum is over all $t \geq 0$. 
The following lemma will be instrumental to restrict the relevant values of $t$ to a finite set. 

\begin{lem} \label{lem:pwl_backlog_bound_at_breakpoint}
    Let $\alpha$ be a PWL concave function and $\beta$ be a PWL convex function. Let $A$ and $B$ be the set of breakpoints of $\alpha$ and $\beta$, respectively. Then, the vertical deviation of $\alpha$ and $\beta$ can always be calculated at some time $t \in A \cup B$.
\end{lem}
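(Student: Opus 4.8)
The plan is to reduce the supremum $\verDev{\alpha}{\beta} = \sup_{t \geq 0}\{\alpha(t) - \beta(t)\}$ to a maximum over the finite set $A \cup B$ by means of a concavity argument. First I would set $g \coloneqq \alpha - \beta$ and observe that, since $\alpha$ is PWL concave and $\beta$ is PWL convex, the function $-\beta$ is PWL concave; being a sum of two PWL concave functions, $g$ is itself concave and piecewise linear. The decisive structural fact is that $g$ can change slope only where $\alpha$ or $\beta$ does, so the breakpoints of $g$ are contained in $A \cup B$, a finite set because $\alpha$ and $\beta$ each have finitely many linear segments.

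Second, I would use this finite set to partition $\mathbb{R}^+$ into the bounded intervals delimited by consecutive points of $A \cup B$, together with one unbounded terminal interval, on each of which $g$ is affine. An affine function on a closed bounded interval attains its maximum at one of its endpoints, both of which lie in $A \cup B$; hence on the bounded part the supremum of $g$ is realized at a breakpoint. On the terminal interval $[t_{\max}, \infty)$ with $t_{\max} = \max(A \cup B)$, the slope of $g$ equals the difference of the asymptotic rates of $\alpha$ and $\beta$, which by concavity is the smallest slope among all segments. If it is non-positive, the supremum there is attained at the left endpoint $t_{\max} \in A \cup B$, so combining with the bounded part shows the global supremum sits at a breakpoint; if it were strictly positive then $g \to +\infty$, i.e.\ $\verDev{\alpha}{\beta} = +\infty$, which is the infinite-backlog case excluded under the standing assumption that the arrival rate does not exceed the service rate.

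Finally, I would argue the global maximum directly from concavity rather than checking each segment: since $g$ is concave, its slopes are non-increasing across segments, so there is a single sign change of the slope, and the maximum of $g$ is located at the breakpoint where the slope passes from non-negative to non-positive. That breakpoint belongs to $A \cup B$, completing the claim.

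The main obstacle I anticipate is not the concavity step but the careful treatment of the boundary behaviour at the origin. Because an arrival curve carries a burst, $\alpha$ has an upward jump at $t=0$, so $g$ may be largest just to the right of $0$; I would handle this by admitting $t=0$ as a degenerate breakpoint (or, equivalently, the first breakpoint $a_1$ of $\alpha$) in $A \cup B$ whenever the first segment of $g$ already has non-positive slope. Making precise that the terminal affine piece cannot have positive slope under the finiteness assumption, and that the monotone non-increasing slopes guarantee exactly one maximizing breakpoint, is where the argument must be stated with the most care.
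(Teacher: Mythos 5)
Your proof is correct, but it takes a genuinely different route from the paper's. You argue \emph{directly}: the difference $g=\alpha-\beta$ is PWL concave with breakpoints contained in $A\cup B$, its maximum over each bounded affine piece sits at an endpoint, and the terminal piece either has non-positive slope (so the supremum there is at its left endpoint) or forces $v(\alpha,\beta)=+\infty$, which stability excludes. The paper instead argues \emph{by contradiction}: it assumes the supremum is attained at some $t'\notin A\cup B$ and splits into three cases on the local rates $r^{t'}$ and $R^{t'}$ --- equal rates imply the same value is already attained at the neighbouring breakpoints; $r^{t'}>R^{t'}$ implies the distance keeps growing up to the slope-crossing breakpoint $a^*_{\alpha,\beta}$ of Def.~\ref{def:a_star_and_s_star}, contradicting optimality of $t'$; and $r^{t'}<R^{t'}$ is symmetric. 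Both arguments ultimately rest on the same structural fact (the slopes of $g$ are non-increasing, so the extremum sits where the slope changes sign), but your packaging is more self-contained and is explicit about two points the paper leaves implicit: the behaviour on the unbounded terminal segment (the paper only invokes ``the stability condition'' in passing) and the jump of $\alpha$ at $t=0$. What the paper's version buys in exchange is that its case analysis names the maximizing breakpoint as $a^*_{\alpha,\beta}$, which is reused later, e.g.\ as $\tau=a^*_{\alpha_2+\gamma_{r_1,0},\beta}$ in Thm.~\ref{thm:basecaseNew}; your slope-sign-change breakpoint is the same point, just unnamed. One caveat shared by both proofs: under the paper's normal-form definitions, breakpoints are only intersections of \emph{consecutive} segments, so for instance a single token bucket against a single rate-latency curve gives $A\cup B=\emptyset$ even though the deviation is attained at the latency $T$; both arguments tacitly use the richer convention that every kink (and the jump at $t=0^+$) counts as a breakpoint --- a convention you at least flag explicitly.
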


\begin{rem} \label{rem:pwl_horizontal_deviation_at_breakpoint}
    The result of Lem.~\ref{lem:pwl_backlog_bound_at_breakpoint} also applies to the horizontal deviation.
\end{rem}

In the following, we denote the set of breakpoints of $\alpha_1, \alpha_2$, and $\beta$ as $A_1, A_2$, and $B$, respectively. 
Lem.~\ref{lem:pwl_backlog_bound_at_breakpoint} ensures that only $A_1$ and the set of breakpoints of $\beta_{\theta}^1$, induced by $A_2 \cup B$, need to be considered. A closer examination of these breakpoints reveals structural differences. In particular, the breakpoints of $\beta_{\theta}^1$ need further attention, as they depend on $\theta$. 
In contrast, the breakpoints of $\alpha_1$, $A_1$, remain invariant with respect to $\theta$. We call the set of breakpoints $A_1$ \emph{absolute time set} of $\alpha_1$. The absolute time set of the breakpoints of $\beta_{\theta}^1$, including the shift by $\theta$, is denoted by $\tabs$. 
The set of breakpoints of $\beta_{\theta}^1$, without being shifted by $\theta$, is called \textit{relative time set} and is given by
\begin{align*}
    \trel = A_2 \cup B.
\end{align*}
The relative time set $\trel$ allows us the usage of the breakpoints without having a dependency on $\theta$.
In the following, we derive from this relative time set an absolute time set of $\beta_{\theta}^1$, which is also no longer dependent on $\theta$.
This is achieved by determining a single value of $\theta$ for each $t \in \trel$. Formally, for each $t \in \mathcal{T}_{\beta_{\theta}^1}^{\text{rel}}$, the corresponding value $\theta_t$ is defined as the solution to the following equation:
\begin{align} 
    \alpha_1(\theta_t) = \alpha_1(t) - \beta_{\theta_t}^1(t). \label{Eq:matching_rule_rel_to_abs_time}
\end{align}

As established in Lem.~\ref{lem:pwl_backlog_bound_at_breakpoint}, only the breakpoints of the curves are relevant for computing the backlog bound.
So we consider each relative point in time $t \in \trel$, \emph{assuming} that it would be the one for which the vertical deviation is taken on. Then we obtain $\theta_t$ as the solution of Eq.~\eqref{Eq:matching_rule_rel_to_abs_time}, that minimizes this backlog bound under this assumption. 
In order to justify Eq.~\eqref{Eq:matching_rule_rel_to_abs_time}, let us consider an arbitrary, but fixed time $t \in \trel$. At this time $t$, the vertical distance between the two functions is given by $\alpha_1(t) - \beta_{\theta}^1(t)$, for $\theta < t$. This distance decreases as $\theta$ increases, so higher values of $\theta$ appear to be better for minimizing the backlog bound.
However, this perspective neglects an important factor — namely, the behavior of the vertical distance at time $\theta$. Specifically, as $\theta$ increases, the vertical distance at time $\theta$, given by $\alpha_1(\theta)$, also increases. Consequently, selecting a larger value of $\theta$ to reduce the distance at time  $t$ simultaneously results in a larger distance at time $\theta$. 
The optimal balance of this trade-off is given by $\theta_t$ as the solution of Eq.~\eqref{Eq:matching_rule_rel_to_abs_time}. 
For $\theta_t < t$, the respective $\theta_t$ can be derived as follows: 

\begin{align}
    \alpha_1(\theta_t) &= \alpha_1(t + \theta_t) - \beta^1_{\theta_t}(t + \theta_t) \notag \\
    \llap{$\Leftrightarrow$ \qquad \qquad \hspace{11.1em}} \alpha_1(\theta_t) &= \alpha_1(t + \theta_t) - \beta(t + \theta_t) + \alpha_2(t) \notag \\
    \llap{$\Leftrightarrow$ \qquad \qquad \hspace{0.6em}} \beta(t + \theta_t) - \alpha_1(t + \theta_t) + \alpha_1(\theta_t) &= \alpha_2(t) \notag \\
    \llap{$\Leftrightarrow$ \quad} \beta(t + \theta_t) - \alpha_1(t + \theta_t) + \alpha_1 \otimes \delta_t (t+\theta_t) &= \alpha_2(t) \notag \\
    \llap{$\Leftrightarrow$ \qquad \qquad \hspace{1.81em}} (\beta - \alpha_1 + (\alpha_1 \otimes \delta_t)) (t+\theta_t) &= \alpha_2(t), \label{eq:derive_theta_t}
\end{align}
\vspace*{-8mm}
\begin{align}
    \Rightarrow \theta_t &= d_{\beta - \alpha_1 + (\alpha_1 \otimes \delta_t), \alpha_2}(t) \label{eq:theta_t:line1} \\
    &= ([\beta - \alpha_1 + (\alpha_1 \otimes \delta_t)]^+)^{-1}(\alpha_2(t)) - t \label{eq:theta_t:line2}
\end{align}
where we used for Eq.~\eqref{eq:theta_t:line1} that the horizontal distance is given by the shift $\theta_t$ in Eq.~\eqref{eq:derive_theta_t} (see also Eq.~\eqref{eq:virtual-delay}). 
We apply the positive part in Eq.~\eqref{eq:theta_t:line2}, since the pseudo-inverse requires wide-sense increasing functions and the horizontal distance for a positive function $\alpha_2$ only requires non-negative functions.

Utilizing $\theta_t$, the absolute time set is now explicitly given as
\begin{align*}
    \tabs =
    \trel + \theta_t = \{ a_i^2 + \theta_{a_i^2} \ | \ a_i^2 \in A_2 \} \ \cup \ \{ s_i + \theta_{s_i} \ | \ s_i \in B \}.
\end{align*}
Combining this with $A_1$, we obtain the complete set of absolute breakpoint times:
\begin{align*}
    \mathcal{T} = A_1  \cup  \tabs.
\end{align*}

This reduces the number of $v_t(\theta)$ curves to be evaluated from an uncountable set (for $t \geq 0$) to a finite set (for $t \in \mathcal{T}$), which can be computed given the parameters of the arrival and service curves.
Since $t$ is now an element of the finite set $\mathcal{T}$, we can replace the supremum by a maximum in Eq.~\eqref{eq:problem_with_sup}, leading to the following transformed problem:
\begin{align*}
    \topt = \argmin_{\theta \geq h(\alpha_2,\beta)} \{ \max_{t \in \mathcal{T}} \{v_t(\theta)\} \}.
\end{align*} 

We can further restrict the domain of $\theta$. According to Lem.~\ref{lem:pwl_backlog_bound_at_breakpoint}, the backlog bound can always be calculated at a breakpoint of either $\alpha_1$ or $\beta_{\theta}^1$. Moreover, $\forall t \in \mathcal{T}$ and $\theta > t_{\max}$, with $t_{\max} = \max \mathcal{T}$, it holds that $v_t(\theta) = \alpha_1(\theta)$, which increases as $\theta$ increases. Hence, the backlog-minimizing value of $\theta$ cannot lie beyond $t_{\max}$.
Additionally, since $\max_{t \in \mathcal{T}} \{v_t(\theta)\} = \max \{ \max_{t \in \mathcal{T}} \{v_t(\theta)\}, \alpha_1(\theta) \}$ trivially holds, as $\alpha_1(\theta) = v_0(\theta) \in \{v_t(\theta) \ | \ t \in \mathcal{T} \}$, the problem can be transformed into
\begin{align}
    \topt= \argmin_{\theta \in [h(\alpha_2,\beta), t_{\max}]} \{ \max \{ \max_{t \in \mathcal{T}} \{v_t(\theta)\}, \alpha_1(\theta) \} \}. \label{eq:final_problem}
\end{align}

Fig.~\ref{fig:v_t_curves} gives a graphical representation of Eq.~\eqref{eq:final_problem}. Here, two $v_t$ curves, $v_{t_1}$ and $v_{t_2}$ are plotted against $\alpha_1(\theta)$. Each $v_t$ represents the vertical distance between $\alpha_1$ and $\lo$ at a specific breakpoint $t \in \mathcal{T}$. According to Lemma~\ref{lem:pwl_backlog_bound_at_breakpoint}, only these breakpoints need to be considered to compute backlog bounds.
As $\theta$ varies, the vertical distances $v_t(\theta)$ decrease at some of these breakpoints and increase at others. In order to minimize the overall backlog bound, we aim to find the value of $\theta$ that balances this trade-off such that the largest vertical distance, at all breakpoints, is minimized. 
Formally, this means that this (optimal) value of $\theta$ ($\topt$), marked by a red dot, is obtained at at the intersection of $\max_{t \in \mathcal{T}} \{v_t(\theta)\}$ and $\alpha_1(\theta)$.

\begin{figure}
    \centering
    \includegraphics[scale=0.58]{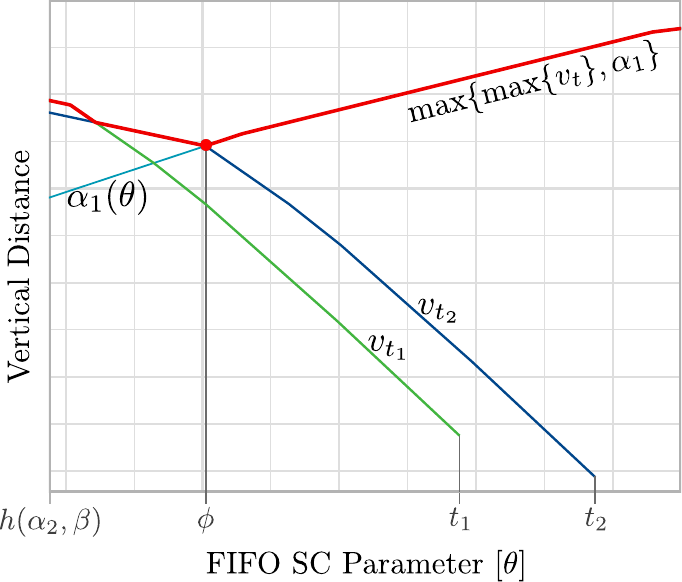}
    \caption{Vertical distances $v_t(\theta)$ and $\alpha_1(\theta)$, for $\theta \geq h(\alpha_2,\beta)$.}
    \label{fig:v_t_curves}
\end{figure}

It is useful to gain further insight into the problem, and examine the behavior of the two functions, $\max_{t \in \mathcal{T}} \{v_t(\theta)\}$ and $\alpha_1(\theta)$, at the boundary values of the interval for $\theta$. The following lemma characterizes the relation between these functions at the interval's endpoints.

\begin{lem} \label{lem:order_function_interval_bounds}
    Let $\max_{t \in \mathcal{T}} \{v_t(\theta)\}$ and $\alpha_1(\theta)$ be defined as above. For the endpoints of the given interval $[h(\alpha_2,\beta), t_{\max}]$ it holds that
    \begin{align*}
        \alpha_1(h(\alpha_2,\beta)) \leq \max_{t \in \mathcal{T}} \{v_t(h(\alpha_2,\beta))\} \ \textrm{and} \
        \alpha_1(t_{\max}) = \max_{t \in \mathcal{T}} \{v_t(t_{\max})\}.
    \end{align*}
\end{lem}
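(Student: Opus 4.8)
The plan is to treat the two endpoint claims separately, since they are structurally different: the right-hand claim is an exact equality obtained by collapsing every member of the family $\{v_t(\theta) : t \in \mathcal{T}\}$ to a common value, whereas the left-hand claim is an inequality that follows simply from the fact that $\alpha_1(\theta)$ is itself one of the curves $v_t$.

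First I would handle the right endpoint $\theta = t_{\max}$. I would start from $t_{\max} = \max \mathcal{T}$, so that every $t \in \mathcal{T}$ satisfies $t \leq t_{\max}$. Evaluating a family member $v_t$ at $\theta = t_{\max}$ therefore lands in the second branch of the definition of $v_t(\theta)$, i.e. the case $\theta \geq t$, which yields $v_t(t_{\max}) = \alpha_1(t_{\max})$ independently of $t$. Since every member of the family attains this same value at $\theta = t_{\max}$, taking the maximum over $t \in \mathcal{T}$ gives $\max_{t \in \mathcal{T}} v_t(t_{\max}) = \alpha_1(t_{\max})$, which is precisely the claimed equality.

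Next I would address the left endpoint $\theta = h(\alpha_2,\beta)$ by invoking the observation recorded in the derivation preceding Eq.~\eqref{eq:final_problem}, namely that $\alpha_1(\theta) = v_0(\theta)$ and hence $\alpha_1$ belongs to the family $\{v_t(\theta) : t \in \mathcal{T}\}$. Consequently $\max_{t \in \mathcal{T}} v_t(\theta) \geq v_0(\theta) = \alpha_1(\theta)$ holds for every $\theta$, and in particular at $\theta = h(\alpha_2,\beta)$, which is exactly the desired inequality. Equivalently, since $0 \leq h(\alpha_2,\beta)$, the second branch of $v_0$ gives $v_0(h(\alpha_2,\beta)) = \alpha_1(h(\alpha_2,\beta))$, so the maximum over $\mathcal{T}$ is at least this value; the inequality (rather than equality) reflects that some breakpoint $t > h(\alpha_2,\beta)$ may contribute a strictly larger vertical distance $\alpha_1(t) - \beta(t) + \alpha_2(t - h(\alpha_2,\beta))$.

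I do not expect a genuine obstacle here; both claims reduce to a careful reading of the case split in the definition of $v_t$ together with the membership of $v_0$ in the family. The only point requiring care is the boundary behaviour of the definition at $\theta = t$: I must confirm that the condition $\theta \geq t$ (not $\theta > t$) selects the second branch, so that the collapse $v_t(t_{\max}) = \alpha_1(t_{\max})$ remains valid for the maximal breakpoint $t = t_{\max}$ itself, and I should note that no element of $\mathcal{T}$ exceeds $t_{\max}$, which is immediate from $t_{\max} = \max \mathcal{T}$.
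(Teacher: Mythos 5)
Your proposal is correct and follows essentially the same route as the paper's proof: the inequality at $\theta = h(\alpha_2,\beta)$ comes from the membership $\alpha_1(\theta) = v_0(\theta) \in \{v_t(\theta) \mid t \in \mathcal{T}\}$, and the equality at $\theta = t_{\max}$ comes from every $v_t$ collapsing to $\alpha_1$ via the second branch of its definition. The only difference is that you spell out explicitly why the collapse holds at $t_{\max}$ (every $t \in \mathcal{T}$ satisfies $t \leq t_{\max}$, so the branch $\theta \geq t$ applies), whereas the paper simply asserts it "by the definition of $v_t(\theta)$ curves."
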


To address the problem given by Eq.~\eqref{eq:final_problem}, we introduce the concept of a \textit{first intersection} between $\max_{t \in \mathcal{T}} \{v_t(\theta)\}$ and $\alpha_1(\theta)$. 
This is essential for identifying the exact point in time where the function $\max_{t \in \mathcal{T}} \{v_t(\theta)\}$ drops onto the function $\alpha_1(\theta)$. 
We say that $\max_{t \in \mathcal{T}} \{v_t(\theta)\}$ and $\alpha_1(\theta)$ \textit{first intersect} at 
\begin{align*}
    \phi = \min \{ x \ | \ \alpha_1(x) = \max_{t \in \mathcal{T}} \{v_t(x)\} \}.
\end{align*}
It is important to note that a first intersection between $\max_{t \in \mathcal{T}} \{v_t(\theta)\}$ and $\alpha_1(\theta)$ always exists.
According to Lem.~\ref{lem:order_function_interval_bounds}, these two functions are guaranteed to intersect at least at $\theta = t_{\max}$.
Furthermore, by definition, every $v_t(\theta)$ curve, with $t \in \mathcal{T}$, is strictly decreasing for $\theta < t$ and strictly increasing for $\theta \geq t$. If $\max_{t \in \mathcal{T}} \{v_t(\theta)\}$ is taken at the respective strictly decreasing parts of the $v_t(\theta)$ curves, then the maximum function $\max_{t \in \mathcal{T}} \{v_t(\theta)\}$ itself is also strictly decreasing. The $\max_{t \in \mathcal{T}} \{v_t(\theta)\}$ is taken on the strictly decreasing parts until the maximum function $\max_{t \in \mathcal{T}} \{v_t(\theta)\}$ eventually equals $\alpha_1(\theta)$, which is strictly increasing.
This is exactly the first intersection $\phi$ of $\max_{t \in \mathcal{T}} \{v_t(\theta)\}$ and $\alpha_1(\theta)$, after which $\max_{t \in \mathcal{T}} \{v_t(\theta)\} = \alpha_1(\theta)$ holds. So we conclude that $\max_{t \in \mathcal{T}} \{v_t(\theta)\}$ is strictly decreasing for $\theta < \phi$ and strictly increasing for $\theta \geq \phi$. 
Note that if $\max_{t \in \mathcal{T}} \{v_t(\theta)\}$ is directly assumed to be on the strictly increasing part of the $v_t(\theta)$ curves, then the first intersection is exactly at $h(\alpha_2, \beta)$.

The following theorem provides the solution to the problem stated in Eq.~\eqref{eq:final_problem}, given by the first intersection $\phi$.

\begin{thm}\label{thm:argmin_of_max_is_opt_theta}
    Let $\max_{t \in \mathcal{T}} \{v_t(\theta)\}$ and $\alpha_1(\theta)$ be defined as above.
    Further, let the first intersection of $\max_{t \in \mathcal{T}} \{v_t(\theta)\}$ and $\alpha_1(\theta)$ be at $\phi$. Then, it holds
    \begin{align*}
        \topt= \argmin_{\theta \in [h(\alpha_2,\beta), t_{\max}]} \{ \max \{ \max_{t \in \mathcal{T}} \{v_t(\theta)\}, \alpha_1(\theta) \} \} = \phi.
    \end{align*}
    \begin{proof} \normalfont{
        According to Lem.~\ref{lem:order_function_interval_bounds} it holds that $\alpha_1(h(\alpha_2,\beta)) \leq \max_{t \in \mathcal{T}} \{v_t(h(\alpha_2,\beta))\}$ and  
        $\alpha_1(t_{\max}) = \max_{t \in \mathcal{T}} \{v_t(t_{\max})\}$ so the following also holds:
        \begin{align*}
            \forall \ \theta < \phi : \alpha_1(\theta) \leq \max_{t \in \mathcal{T}} \{v_t(\theta)\} \ \land \ 
            \forall \ \theta \geq \phi : \alpha_1(\theta) = \max_{t \in \mathcal{T}} \{v_t(\theta)\}.
        \end{align*}
        So the maximum function can be described as follows:
        \begin{align*}
            \max\{ \max_{t \in \mathcal{T}} \{v_t(\theta)\}, \alpha_1(\theta)\} = \begin{cases}
                \max_{t \in \mathcal{T}} \{v_t(\theta)\}, & \mbox{if } \theta < \phi, \\
                \alpha_1(\theta), & \mbox{if } \theta \geq \phi.
            \end{cases}
        \end{align*}
        
        Suppose $\argmin_{\theta \in [h(\alpha_2,\beta), t_{\max}]} \{ \max \{ \max_{t \in \mathcal{T}} \{v_t(\theta)\}, \alpha_1(\theta) \} \}$ is assumed at $\phi'$, with $ \phi' \in [h(\alpha_2,\beta), t_{\max}]$ and $\phi \neq \phi'$. So
        \begin{align*}
            \argmin_{\theta \in [h(\alpha_2,\beta), t_{\max}]} \{ \max \{ \max_{t \in \mathcal{T}} \{v_t(\theta)\}, \alpha_1(\theta) \} \} = \phi'
        \end{align*} should hold.
        Consider two cases: $\phi' < \phi$ and $\phi' > \phi$. \\
        \underline{Case I} ($\phi' < \phi$): Since $\phi' < \phi$ and $\max\{ \max_{t \in \mathcal{T}} \{v_t(\theta)\}, \alpha_1(\theta)\} = \max_{t \in \mathcal{T}} \{v_t(\theta)\}$ is strictly decreasing for $\theta < \phi$, it holds that: 
        \begin{align*}
            \max\{ \max_{t \in \mathcal{T}} \{v_t(\phi')\}, \alpha_1(\phi')\} 
            > \max\{ \max_{t \in \mathcal{T}} \{v_t(\phi)\}, \alpha_1(\phi)\}. \quad \lightning
        \end{align*} \\
        \underline{Case II} ($\phi' > \phi$): Since $\phi' > \phi$ and $\max\{ \max_{t \in \mathcal{T}} \{v_t(\theta)\}, \alpha_1(\theta)\} = \alpha_1(\theta)$ is strictly increasing for $\theta \geq \phi$, it holds that: 
        \begin{align*}
            \max \{ \max_{t \in \mathcal{T}} \{v_t(\phi')\}, \alpha_1(\phi') \} 
            > \max\{ \max_{t \in \mathcal{T}} \{v_t(\phi)\}, \alpha_1(\phi) \}. \quad \lightning
        \end{align*} 
        So $\topt= \argmin_{\theta \in [h(\alpha_2,\beta), t_{\max}]} \{ \max \{ \max_{t \in \mathcal{T}} \{v_t(\theta)\}, \alpha_1(\theta) \} \} = \phi$ holds.
    } \qed
    \end{proof}
\end{thm}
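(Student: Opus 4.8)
The plan is to convert the argmin problem in Eq.~\eqref{eq:final_problem} into a monotonicity argument by first understanding which of the two competing functions, $\max_{t \in \mathcal{T}}\{v_t(\theta)\}$ and $\alpha_1(\theta)$, dominates the pointwise maximum on each side of the first intersection $\phi$. The objective is the upper envelope of these two functions, so once I know its shape—strictly decreasing up to $\phi$ and strictly increasing thereafter—the minimizer must sit exactly at the switch point $\phi$. All the required monotonicity ingredients are already in place: each $v_t(\theta)$ is strictly decreasing for $\theta<t$ and strictly increasing for $\theta\geq t$, so the envelope $\max_{t \in \mathcal{T}}\{v_t(\theta)\}$ is strictly decreasing for $\theta<\phi$ and strictly increasing for $\theta\geq\phi$, while $\alpha_1(\theta)$ is strictly increasing throughout.

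First I would invoke Lem.~\ref{lem:order_function_interval_bounds} to fix the ordering of the two functions at the interval endpoints, namely $\alpha_1(h(\alpha_2,\beta)) \leq \max_{t \in \mathcal{T}}\{v_t(h(\alpha_2,\beta))\}$ and $\alpha_1(t_{\max}) = \max_{t \in \mathcal{T}}\{v_t(t_{\max})\}$. Combining these endpoint relations with the definition of $\phi$ as the \emph{first} point where the two functions meet, and with the strict monotonicity noted above, I would establish the piecewise characterization
\begin{align*}
\max\{\max_{t \in \mathcal{T}}\{v_t(\theta)\}, \alpha_1(\theta)\} =
\begin{cases}
\max_{t \in \mathcal{T}}\{v_t(\theta)\}, & \theta < \phi, \\
\alpha_1(\theta), & \theta \geq \phi.
\end{cases}
\end{align*}
In words, the decreasing envelope is the larger function until $\phi$, and at and beyond $\phi$ the two functions coincide so the increasing $\alpha_1$ governs the maximum.

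With this characterization secured, I would prove $\topt = \phi$ by contradiction, supposing the minimizer is attained at some $\phi' \neq \phi$ in $[h(\alpha_2,\beta), t_{\max}]$ and splitting on the position of $\phi'$. If $\phi' < \phi$, the objective coincides with $\max_{t \in \mathcal{T}}\{v_t(\theta)\}$ on that region, which is strictly decreasing, so its value at $\phi'$ strictly exceeds its value at $\phi$, contradicting optimality of $\phi'$. If $\phi' > \phi$, the objective coincides with $\alpha_1(\theta)$, which is strictly increasing, so again the value at $\phi'$ strictly exceeds that at $\phi$. Both cases are impossible, forcing $\phi' = \phi$.

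I expect the main obstacle to be the rigorous justification of the piecewise characterization over the \emph{full} subintervals rather than only at the endpoints supplied by Lem.~\ref{lem:order_function_interval_bounds}. For $\theta < \phi$, I would argue that since both functions are continuous and, by minimality in the definition of $\phi$, never coincide on $[h(\alpha_2,\beta),\phi)$, the endpoint inequality $\alpha_1 \leq \max_{t \in \mathcal{T}}\{v_t\}$ propagates across the whole subinterval. For $\theta \geq \phi$, I would lean on the fact noted just before the theorem that $\alpha_1(\theta) = v_0(\theta)$ is itself among the constituent curves (so $\max_{t \in \mathcal{T}}\{v_t\} \geq \alpha_1$ holds identically), combined with the strict monotonicity of the envelope before $\phi$ and of $\alpha_1$ after $\phi$, to conclude that the envelope cannot rise above $\alpha_1$ once they have met, forcing equality throughout $[\phi, t_{\max}]$. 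Once this characterization is in hand, the monotonicity-based contradiction argument of the previous paragraph is entirely routine.
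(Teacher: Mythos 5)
Your proposal is correct and follows essentially the same route as the paper's own proof: invoke Lem.~\ref{lem:order_function_interval_bounds} for the endpoint orderings, establish the piecewise characterization of $\max\{\max_{t \in \mathcal{T}}\{v_t(\theta)\}, \alpha_1(\theta)\}$ as the strictly decreasing envelope before $\phi$ and the strictly increasing $\alpha_1$ from $\phi$ onward, and then rule out any minimizer $\phi' \neq \phi$ by the same two-case monotonicity contradiction. If anything, your closing paragraph is slightly more careful than the paper, which asserts the propagation of the endpoint relations to the full subintervals without the continuity and $\alpha_1 = v_0$ arguments you spell out.
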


\begin{rem} \label{rem:first_intersections_is_last_intersection}
    Note that the first intersection of $\max_{t \in \mathcal{T}} \{v_t(\theta)\}$ and $\alpha_1(\theta)$ is also the last of all intersections of $v_t(\theta)$ curves, $t \in \mathcal{T}$, with $\alpha_1(\theta)$. See again Fig.~\ref{fig:v_t_curves}.
\end{rem}

\section{Efficient Calculation of Near-Optimal Backlog Bounds}\label{sec:heuristic}
In the following, we expand on the calculation of the FIFO SC parameter $\topt$ and its corresponding minimal backlog bound $\qmin$. We strive for an efficient calculation of backlog bounds that are still near to the minimal ones from the previous section using Thm.~\ref{thm:argmin_of_max_is_opt_theta}. In particular, according to Rem.~\ref{rem:first_intersections_is_last_intersection}, we calculate the $v_t$ curves of all $t\in\mt$, calculate their respective intersection points with $\foi$, then determine $\phi$ as the maximum of all intersections to obtain $\topt$ and $\qmin = v(\foi, \loopt)$, where $\qmin$ is measured at $\topt$. This constitutes an exact method, but requires the calculation of the intersection of \emph{all} possible breakpoints $t\in\mt$ of $\foi$ and $\cross$. For an efficient calculation of more complex scenarios (in particular with more flows and more segments per flow), the question arises whether we can exclude certain $t \in \mt$ in a first step to reduce the number of $v_t$ curves and intersections we have to calculate. We first describe the central idea: Instead of considering $\foi$ as a whole, we take a \emph{decomposition} approach. From Def.~\ref{def:concave_piecewise_linear_normal_form}, we know that $\foi$ is the minimum of $n$ token-bucket arrival curves $\gamma_i$. Hence, we can split $\foi$ into its token-bucket segments $S_i, 1\leq i \leq n$. For a token-bucket foi, we can identify a particular breakpoint $\astar\in\trel$ where the backlog is taken on and find a closed form for $\topt$ that minimizes its backlog $\qmin$. 

\begin{thm}\label{thm:basecaseNew}
    Let the foi $\alpha_1$ be a token-bucket arrival curve $\gamma_{r_1,b_1}$ and let the cross-flow $\alpha_2$ be a PWL concave arrival curve in normal form. Further, let $\beta$ be a  PWL convex service curve. The FIFO SC parameter that minimizes the backlog bound is given by:
    \begin{align*}
        \topt &= \argmin_{\theta \in [h(\alpha_2,\beta), t_{\max}]} \{ \max_{t \in \mathcal{T}} \left\{ v_t(\theta) \right\} \} \\
        &= h(\alpha_2+\gamma_{r_1,0}, \beta)= \frac{b^{\tau}_2+(r_2^{\tau}+r_1)\cdot \tau}{R^{\beta^{-1}({\alpha_2(\tau))}}} + T^{\beta^{-1}({\alpha_2(\tau))}} - \tau,
    \end{align*}
    
    with $\tau = a^*_{\alpha_2 + \gamma_{r_1,0},\beta}$.
    
    \begin{proof} \normalfont{
        According to Rem.~\ref{rem:first_intersections_is_last_intersection}, the result of Thm.~\ref{thm:argmin_of_max_is_opt_theta} also states that the optimal $\theta$ is given by the last of all intersections of $v_t(\theta)$ curves and $\alpha_1(\theta)$. In order to obtain the last intersection, let us consider the intersections of $\alpha_1(\theta)$ and $v_t(\theta)$, with $t \in \trel$.
        The optimal value $\theta_t$ for each $t \in \trel$ can be calculated as follows:
        \begin{align*}
            \alpha_1(\theta_t) &= \alpha_1(t + \theta_t) - \beta(t + \theta_t) + \alpha_2(t) \\
        \Leftrightarrow \hspace{5.55em}   r_1 \theta_t + b_1 &= r_1 t + r_1 \theta_t + b_1 - \beta(t + \theta_t) + \alpha_2(t) \\
        \Leftrightarrow \hspace{5.55em}   \beta(t + \theta_t) &= r_1 t + \alpha_2(t) \\
        \Leftrightarrow \hspace{8.5em}   \theta_t &= \beta^{-1}(r_1 t + \alpha_2(t)) - t \\
        \Leftrightarrow \hspace{8.5em}   \theta_t &= d_{\alpha_2 + \gamma_{r_1,0}}(t). 
        \end{align*}
        
        Now we need to find the time at which the backlog is maximal. As the backlog bound coincides with $\alpha_1(\topt)$ and $\alpha_1$ is increasing, we look for the maximal $\theta_t$:
        \begin{align*}
        \topt= \sup_{t \in \trel} \{d_{\alpha_2 + \gamma_{r_1,0}, \beta}(t) \} 
        \overset{( Rem.~\ref{rem:pwl_horizontal_deviation_at_breakpoint})}{=} \sup_{t\geq 0} \{d_{\alpha_2 + \gamma_{r_1,0}, \beta}(t) \} 
        = h(\alpha_2 + \gamma_{r_1,0}, \beta).
        \end{align*}
        
        Using the piecewise linear nature of the curves, it is clear that this horizontal deviation is taken on at $\tau = a^*_{\alpha_2 + \gamma_{r_1,0},\beta}$, and thus
        \begin{align*}
        \theta_{\text{opt}} = 
        h(\alpha_2+\gamma_{r_1,0}, \beta)=
        \frac{b^{\tau}_2+(r_2^{\tau}+r_1)\cdot \tau}{R^{\beta^{-1}({\alpha_2(\tau))}}} + T^{\beta^{-1}({\alpha_2(\tau))}} - \tau . \hspace*{20mm} \qed
        \end{align*}} \end{proof}
\end{thm}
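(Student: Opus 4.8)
The plan is to build directly on Thm.~\ref{thm:argmin_of_max_is_opt_theta} and Rem.~\ref{rem:first_intersections_is_last_intersection}, which together reduce the optimization to locating the \emph{last} intersection of the individual $v_t(\theta)$ curves with $\foi(\theta)$. Since here $\foi = \gamma_{r_1,b_1}$ is a single token-bucket segment, it has no interior breakpoints in the sense of Def.~\ref{def:concave_piecewise_linear_normal_form} (so $A_1 = \emptyset$), and the only relevant breakpoints are those of $\lo$, i.e. those in the relative time set $\trel = A_2 \cup B$. Consequently $\topt = \max_{t \in \trel}\theta_t$, where each $\theta_t$ is the solution of the matching rule in Eq.~\eqref{Eq:matching_rule_rel_to_abs_time}, reading the relative breakpoint $t$ at its absolute location $t+\theta_t$.

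First I would compute $\theta_t$ explicitly for each $t \in \trel$, exploiting the linearity of $\foi$. Substituting $\foi(\theta_t) = r_1\theta_t + b_1$ and $\foi(t+\theta_t) = r_1 t + r_1\theta_t + b_1$ into the matching equation cancels the common term $r_1\theta_t + b_1$ on both sides, leaving $\beta(t+\theta_t) = r_1 t + \cross(t)$. Since $r_1 t = \gamma_{r_1,0}(t)$, the right-hand side is exactly $(\cross + \gamma_{r_1,0})(t)$, so $\theta_t = \beta^{-1}\big((\cross+\gamma_{r_1,0})(t)\big) - t = d_{\cross+\gamma_{r_1,0},\beta}(t)$ by the virtual-delay identity in Eq.~\eqref{eq:virtual-delay}. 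This is the crucial simplification: the token-bucket case collapses the $\theta$-dependent balance into a plain virtual delay from the \emph{shifted} arrival curve $\cross+\gamma_{r_1,0}$ to $\beta$.

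Next I would recognize the maximum over breakpoints as a horizontal deviation. Because $\topt = \max_{t \in \trel}\theta_t = \sup_{t \in \trel} d_{\cross+\gamma_{r_1,0},\beta}(t)$, and Rem.~\ref{rem:pwl_horizontal_deviation_at_breakpoint} guarantees that the horizontal deviation of PWL curves is attained at some breakpoint, the supremum over $\trel$ coincides with the supremum over all $t \geq 0$, which by Def.~\ref{def:horizontal_deviation} equals $h(\cross+\gamma_{r_1,0},\beta)$. To obtain the closed form I would then use that $\cross+\gamma_{r_1,0}$ is concave PWL and $\beta$ convex PWL: the horizontal deviation is attained at the first instant $\tau$ where the rate of $\cross+\gamma_{r_1,0}$ falls to (or below) the matching rate of $\beta$, i.e. at $\tau = a^*_{\cross+\gamma_{r_1,0},\beta}$ (Def.~\ref{def:a_star_and_s_star}). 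Reading $\cross$ on its active segment $b_2^\tau + r_2^\tau\tau$ and inverting $\beta$ on the active linear segment $\beta_{R,T}$ yields $\theta_\tau = \frac{b_2^\tau + (r_2^\tau+r_1)\tau}{R} + T - \tau$, which is the stated formula.

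The hard part will be the very first step: justifying that the backlog-minimizing $\theta$ is the single quantity $\max_{t}\theta_t$ rather than the outcome of balancing several $v_t$ curves at once. This rests on the monotonicity structure established just before Thm.~\ref{thm:argmin_of_max_is_opt_theta}, namely that each $v_t(\theta)$ is strictly decreasing then strictly increasing, so $\max_{t}\{v_t(\theta)\}$ descends along the decreasing branches until it meets the strictly increasing $\foi(\theta)$ at the last crossing, after which the two functions agree. I would also need to check that every $\theta_t$ indeed lies in the relevant branch (i.e. $\theta_t < t + \theta_t$, so that the case $h(\cross,\beta)\leq\theta<t$ of $v_t$ applies) and within $[h(\cross,\beta), t_{\max}]$; the lower bound is exactly what Lem.~\ref{lem:theta<verDiv} secures. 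Once these branch and domain conditions are in place, the reduction to $h(\cross+\gamma_{r_1,0},\beta)$ and its PWL evaluation are routine.
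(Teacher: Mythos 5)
Your proposal is correct and follows essentially the same route as the paper's own proof: it reduces the problem via Thm.~\ref{thm:argmin_of_max_is_opt_theta} and Rem.~\ref{rem:first_intersections_is_last_intersection} to the last intersection of the $v_t$ curves with $\alpha_1$, exploits the linearity of $\gamma_{r_1,b_1}$ to cancel $r_1\theta_t + b_1$ and obtain $\theta_t = d_{\alpha_2+\gamma_{r_1,0},\beta}(t)$, lifts the maximum over $\trel$ to $h(\alpha_2+\gamma_{r_1,0},\beta)$ via Rem.~\ref{rem:pwl_horizontal_deviation_at_breakpoint}, and evaluates it at $\tau = a^*_{\alpha_2+\gamma_{r_1,0},\beta}$. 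Your closing observations (that $A_1=\emptyset$ for a token bucket, and that the branch/domain conditions need checking) are sound additions but do not change the argument.
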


Using Thm.~\ref{thm:basecaseNew}, we can find the optimal value $\topt^i$ that minimizes $\qmin^i$ for each $\gamma_i$ corresponding to segment $S_i$ of $\foi$. Considering Eq.~\eqref{eq:concp2}, we recognize that $\qmin^i$ may not be assumed at a valid point in time when considering $\foi$ as a whole. Indeed, each segment $S_i$ is only defined over its respective interval $I_i\coloneqq[a_i, a_{i+1})$. If now $\topt^i\notin I_i$, we adjust its value as follows (with an arbitrarily small $\epsilon>0$):
    \begin{equation}\label{eq:tqadjust}
    \toptp^i \coloneqq \begin{cases}
        a_i, & \topt^i < a_i, \\
        a_{i+1}-\epsilon, & \topt^i > a_{i+1}, \\
        \topt^i, & \textrm{otherwise}.
    \end{cases}
    \end{equation}
This adjustment follows from the properties of the involved curves. We know that $\gamma_i$ and $\cross$ are concave curves, their sum is hence also concave. In addition, the service curve $\beta$ is convex. 
If we now assume $\topt^i$ outside of the interval $I_i$, we can use these properties to find the largest backlog bound that is still assumed over $I_i$ as follows: In case 1 of Eq.~\eqref{eq:tqadjust}, if $\topt^i < a_i$, we set $\toptp^i = a_i$.
Since $\foi$ and $\cross$ are concave, their aggregate rate decreases in time, while due to the convexity of $\beta$, its rate increases in time. As such, measuring the backlog at any point $x\in I_i$, $x>a_i$, results in a smaller backlog than if measured at $a_i$. For case 2 it is similar, but we need to take the half-open nature of $I_i$ into account. Using an arbitrarily small $\epsilon>0$, and assuming the backlog at $a_{i+1}-\epsilon$ ensures that we can approximate the largest possible backlog that is still in $I_i$ as close as we desire.
This leaves us with two vectors of $n$ values for $\topt^i$ and $\toptp^i$. For purposes of the heuristic, we let $\init\coloneqq [\topt^1, \dots, \topt^n]$ and $\post\coloneqq[\toptp^1,\dots,\toptp^n]$. We proceed with two observations about $\topt^i$ and $\toptp^i$:

\begin{lem}\label{cor:tqiorder}
It holds that $\topt^1\geq \dots \geq \topt^n$.
\end{lem}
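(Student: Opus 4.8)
The plan is to reduce the ordering claim to a single monotonicity property of the horizontal deviation. By the decomposition set up just before the statement, each $\topt^i$ is the backlog-minimizing parameter obtained by treating the token-bucket segment $S_i = \gamma_{r_i,b_i}$ of $\alpha_1$ as the whole foi, so Thm.~\ref{thm:basecaseNew} applies verbatim to each segment and yields the closed form
\begin{align*}
    \topt^i = h(\alpha_2 + \gamma_{r_i,0}, \beta).
\end{align*}
Crucially, only the rate $r_i$ enters this expression, not the burst $b_i$. Since $\alpha_1$ is a PWL concave curve in normal form, Def.~\ref{def:concave_piecewise_linear_normal_form}, Eq.~\eqref{eq:concp1}, provides the strict rate ordering $r_1 > r_2 > \dots > r_n$. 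Hence it suffices to show that the map $r \mapsto h(\alpha_2 + \gamma_{r,0}, \beta)$ is non-decreasing in $r \geq 0$.

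I would establish this via monotonicity of $h(\cdot,\beta)$ in its first argument. For $r \leq r'$ and any $t \geq 0$ we have $\gamma_{r,0}(t) = rt \leq r't = \gamma_{r',0}(t)$, so $\alpha_2 + \gamma_{r,0} \leq \alpha_2 + \gamma_{r',0}$ pointwise. Writing $f \coloneqq \alpha_2 + \gamma_{r,0}$ and $f' \coloneqq \alpha_2 + \gamma_{r',0}$, and using that $\beta$ is non-decreasing (being a convex, rate-latency-type service curve), I observe that for each fixed $t$ the inclusion
\begin{align*}
    \{ d \geq 0 : f'(t) \leq \beta(t+d) \} \subseteq \{ d \geq 0 : f(t) \leq \beta(t+d) \}
\end{align*}
holds because $f \leq f'$. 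Taking the infimum over each set reverses the inclusion, giving $\inf\{ d : f(t) \leq \beta(t+d)\} \leq \inf\{d : f'(t) \leq \beta(t+d)\}$ for every $t$, and taking the supremum over $t \geq 0$ (Def.~\ref{def:horizontal_deviation}) then yields $h(f,\beta) \leq h(f',\beta)$.

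Combining these two ingredients with the rate ordering $r_1 > \dots > r_n$ immediately delivers $\topt^1 \geq \topt^2 \geq \dots \geq \topt^n$, as claimed. I do not expect any genuine obstacle here, as the monotonicity of the horizontal deviation is elementary; the only points requiring a little care are to invoke the monotonicity in the correct direction — a steeper token-bucket rate pushes the aggregate arrival curve uniformly upward and therefore further from the fixed convex $\beta$, enlarging the horizontal distance — and to note explicitly that the burst terms $b_i$ drop out of the formula in Thm.~\ref{thm:basecaseNew}, so that the ordering of the $\topt^i$ is dictated entirely by the rate ordering of the normal form.
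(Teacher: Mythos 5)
Your proof is correct, but it takes a more algebraic route than the paper's own argument. The paper's proof, like yours, starts from the rate ordering $r_1 > \dots > r_n$ of Eq.~\eqref{eq:concp1}, but then argues geometrically: since $\beta_\theta^1$ is convex, the segment with the highest rate assumes its backlog at the latest time, so the $\topt^i$ inherit the ordering of the rates. You instead exploit the closed form of Thm.~\ref{thm:basecaseNew}, namely $\topt^i = h(\cross + \gamma_{r_i,0},\beta)$, observe that the burst terms $b_i$ drop out, and reduce the claim to monotonicity of $h(\cdot,\beta)$ in its first argument, which you prove by the set-inclusion argument on $\{d \geq 0 : f(t) \leq \beta(t+d)\}$. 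That inclusion argument is valid as stated (and, incidentally, does not even need $\beta$ to be non-decreasing --- the inclusion follows from $f \leq f'$ alone, so your appeal to that property is superfluous). What your route buys is rigor: the paper's claim that ``the largest time at which the backlog of a single segment is assumed is for the first segment'' is left as an intuitive appeal to convexity, whereas you derive the same ordering from an explicit formula plus an elementary monotonicity property of the horizontal deviation, making transparent that only the rates, and not the bursts, govern the ordering. What the paper's route buys is brevity and independence from the closed form: it works directly with the definition of the per-segment optima and the shape of $\beta_\theta^1$.
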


\begin{lem}\label{prop:uniqueequal} 
There is at most one $\topt^i$ with $\topt^i=\toptp^i$.
\end{lem}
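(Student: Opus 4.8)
The plan is to argue by contradiction, playing off two opposing monotonicities: the per-segment optima $\topt^i$ are non-increasing in $i$ (Lem.~\ref{cor:tqiorder}), whereas the segment intervals $I_i = [a_i, a_{i+1})$ march strictly to the right, since the breakpoints $a_i$ of the concave normal form of $\foi$ strictly increase (Def.~\ref{def:concave_piecewise_linear_normal_form}, where condition~\eqref{eq:concp2} forbids redundant segments). First I would pin down the equality condition: by the adjustment rule in Eq.~\eqref{eq:tqadjust}, $\topt^i = \toptp^i$ holds precisely when the unconstrained optimum of segment $S_i$ already lies inside that segment's own interval of validity, i.e. when $\topt^i \in I_i = [a_i, a_{i+1})$. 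In particular this delivers the strict upper bound $\topt^i < a_{i+1}$ alongside the lower bound $\topt^i \geq a_i$.

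Next I would assume, for contradiction, that two distinct indices $i < j$ both satisfy $\topt^i = \toptp^i$ and $\topt^j = \toptp^j$. The first gives $\topt^i < a_{i+1}$ and the second gives $\topt^j \geq a_j$. Since $i < j$ forces $i+1 \leq j$, the (strict) monotonicity of the breakpoints yields $a_{i+1} \leq a_j$, while Lem.~\ref{cor:tqiorder} yields $\topt^i \geq \topt^j$. Chaining these inequalities,
\[
\topt^j \leq \topt^i < a_{i+1} \leq a_j \leq \topt^j ,
\]
we obtain $\topt^j < \topt^j$, a contradiction. Hence at most one index can satisfy $\topt^i = \toptp^i$, which is the claim.

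The step that carries the whole argument is the strict inequality $\topt^i < a_{i+1}$, which comes \emph{only} from the half-open nature of $I_i$; every other link in the chain is merely weak (we have just $\topt^i \geq \topt^j$ from Lem.~\ref{cor:tqiorder} and $a_{i+1} \leq a_j$ from the breakpoint ordering). The main point to get right is therefore the treatment of the shared boundary: if the equality condition were phrased with a closed interval, two consecutive segments with $\topt^i = \topt^{i+1} = a_{i+1}$ would both qualify and the tie could not be broken. I would thus state the condition consistently with $I_i = [a_i, a_{i+1})$, assigning the right endpoint $a_{i+1}$ to $I_{i+1}$ rather than $I_i$; this is exactly what lets the chain collapse to the strict contradiction $\topt^j < \topt^j$ even though the underlying monotonicities are only weak.
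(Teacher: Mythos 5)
Your proof is correct and follows essentially the same route as the paper's: both rest on the ordering $\topt^1 \geq \dots \geq \topt^n$ from Lem.~\ref{cor:tqiorder} combined with the disjoint, half-open, rightward-marching intervals $I_i$, the paper phrasing it as ``fix the index with $\topt^i \in I_i$ and show every earlier/later segment's optimum falls outside its own interval,'' while you phrase the same inequality chain as a two-index contradiction. Your explicit attention to the strict inequality $\topt^i < a_{i+1}$ from the half-open endpoint is exactly the detail the paper's argument implicitly relies on.
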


Using these insights, we proceed with our decomposition heuristic:

\begin{itemize}
    \item[1.] Calculate $\diff\coloneqq \init - \post = [\topt^1 - \toptp^1, \dots, \topt^n - \toptp^n]$.
    \item[2.] Check whether there exists an entry with value $i=0$ in $\diff$. If yes, \emph{return $v(\foi, \beta_{\topt^i}^1)$ and break.}
    \item[3.] If no such entry exists in $\diff$, iterate over $t\in A_1$ and find the first intersection $\phi$. \emph{Return $v(\foi, \beta_\phi^1)$ and break.}
\end{itemize}

Let us discuss the rationale and potential shortcomings of the heuristic:
If a $\topt^i$ is unchanged after applying Eq.~\eqref{eq:tqadjust}, it may seem that we should have obtained the same value of $\topt$ when using Thm.~\ref{thm:argmin_of_max_is_opt_theta} on $\foi$ without decomposing the curve into its $n$ segments. Consequently, setting $\topt = \topt^i$ seems to be the correct choice in this case. From Lem.~\ref{prop:uniqueequal}, we know that the heuristic can find at most one such $\topt^i$, hence the result is unique. However, in fact, this does not yield the same result as using Thm.~\ref{thm:argmin_of_max_is_opt_theta} in all cases. Whenever $\foi(\astar)$ is assumed at a different segment than $\foi(\astar+\topt)$, the heuristic calculates a larger $\topt$, and consequently larger backlog bound, as it only considers one segment of $\foi$. 
Furthermore, the heuristic can also falsely assume that there exists no entry $i=0$ in $\diff$. In step 3, we know that all $\toptp^i$ in $\diff$ are $t\in A_1$, as we have adjusted them to their respective interval limits using Eq.~\eqref{eq:tqadjust}\footnote{They could be off by $\epsilon$, but we can let $\epsilon \rightarrow 0$.}. For each $\toptp^i$, we calculate its $v_{\toptp^i}$ curve and obtain the intersection with $\foi$. From Rem.~\ref{rem:first_intersections_is_last_intersection}, it follows that the largest of these intersections is equal to $\phi$, hence the $\topt$ that minimizes the backlog bound. We remark that the heuristic always returns a backlog bound, as it always finds a value for $\theta$ -- it may, however, be conservative.

\subsection{Evaluation of the Decomposition Heuristic} \label{sub:heueval} 

We continue with an evaluation of the decomposition heuristic with respect to its accuracy and efficiency. To this end, we calculate and compare the backlog bounds and execution times of the exact method using Thm.~\ref{thm:argmin_of_max_is_opt_theta} and the decomposition heuristic. We consider the following scenario: Assume we have a variable number of 2 to 10 cross flows at the node, multiplexed in a FIFO aggregate with the foi. Let $\foi$ be a PWL concave arrival curve for the foi with either two or four segments (as typically found in literature, e.g. \cite{wrege1996deterministic}). Each cross flow arrival curve $\cross^i$ is modeled as a T-Spec curve, defined as $\cross^i(t)\coloneqq \min\{b_2^1 + r_2^1t, b_2^2+r_2^2t\}$. We pick the packet size $b_i^1$ for each flow $i$ randomly from \qtyrange{0.001}{0.05}{\mega\bit}, the sustained rate $r_i^2$ from \qtyrange{1}{10}{Mbit/s}, and and the first breakpoint $a_1^i$ from \qtyrange{50}{500}{\milli\second}. For $\foi$, we pick a random breakpoint spacing from \qtyrange{100}{500}{\milli\second} that is added to $a_1^1$ to obtain further breakpoints. We set the peak rate for each arrival curve as $r_i^1 = r_i^2 \cdot 8$, similar to \cite{le2001network}. For $\foi$ with four segments, we set the rate of the fourth segment as sustained rate, and assign rate $6\cdot r_i^1$ to segment two, and rate $3\cdot r_i^1$ to segment three. The burst value of each subsequent token bucket is set as
\begin{equation}
    b^i=b^{i-1} - (r^i - r^{i-1})\cdot a_i.
\end{equation}
The server offers a rate-latency service curve $\rl$, where the rate is set such that there is an 80\% utilization of the server when considering the sum of sustained rates of all flows. The latency is set as $T=1/R$ \s{}.  

The experiments were run on hardware using an Intel i7-1165G7 CPU with 4 cores at \ghz{2.8} base frequency, \gb{32} DDR4-\mhz{3200} RAM, a \gb{512} NVMe SSD and the Ubuntu 22.04 LTS operating system. The experiments are implemented using the Nancy library \cite{zippo2022nancy} for C\#, and are run in JetBrains Rider 2024.3 with the .NET SDK 9.0.102 using default compiler settings. 

We perform 500 iterations, where in each one of them we iterate again over the number of cross flows from 2 to 10. For each run, we measure the execution time using \emph{System.Diagnostics.Process.UserProcessorTime}. This method returns the CPU time of the process that is running the experiments, without including any other processing time. We only measure code fragments that are used to calculate one of the two methods, ignoring any other code parts. We take the execution time of both methods and calculate the percentage difference of both measured times. Of course, we also compare the backlog bounds obtained by using the exact method and the heuristic.

\begin{table}
    \centering
    \begin{tabular}{|c|c|c|c|c|c|c|c|c|c|}
      \hspace{2px}Cross [\#] & 2 & 3 & 4 & 5 & 6 & 7 & 8 & 9 & 10 \\ \hline \hline 
        $\muex$ & 15.1 & 15.7 & 16.1 & 16.4 & 16.6 & 16.7 & 16.9 & 17.0 & 17.1 \\
        CI(95\%) &  0.8 & 0.9 & 0.9 & 0.9 & 0.9 & 0.9 & 0.9 & 0.9 & 0.9  \\ \hline 
        $\muheu$ & 15.1 & 15.7 & 16.1 & 16.4 & 16.6 & 16.7 & 16.9 & 17.0 & 17.1 \\
        CI(95\%) & 0.8 & 0.9 & 0.9 & 0.9 & 0.9 & 0.9 & 0.9 & 0.9 & 0.9 \\ \hline
        Acc. [\%] & 98.4 & 99.6 & 99.6 & 100 & 100 & 100 & 100 & 100 & 100   \\ \hline 
        Inc. [\%] & 1.6 & 6.7 & 3.3 & 0 & 0 & 0 & 0 & 0 & 0     \\ 
        CI(95\%) & 0.0 & 0.3 & 0.1 & 0 & 0 & 0 & 0 & 0 & 0     \\ \hline \hline 
        $\execex$ [ms] & 976 & 1362 & 1742 & 2114 & 2520 & 2918 & 3335 & 3758 & 4202 \\ 
        $\execheu$ [ms] & 14.3 & 12.6 & 12.8 & 13.0 & 13.3 & 13.6 & 14.1 & 14.2 & 14.6   \\ \hline 
        Speedup & 68 & 108 & 136 & 163 & 189 & 215 & 237 & 265 & 288 \\ \hline 
    \end{tabular}
    \caption{Comparison of exact method and heuristic for $\foi$ with two segments.}\label{tab:occurences}
\end{table}

The results of the experiments for two segments are shown in Table~\ref{tab:occurences}. The results for four segments can be found in Appendix~D. 
The tables are organized as follows. We report the mean and confidence interval (CI) for the backlog bound ($\muex$ and $\muheu$) as well as execution times ($\execex$ and $\execheu$) for each number of cross flows. The accuracy reports how often the resulting backlog bound using the heuristic is equal to the backlog bound of the exact method. We also calculate the percentage increase and its CI from the exact method to the heuristic in cases where the results are not identical. For the execution times, the speedup from the exact method to the heuristic is calculated as $\execex / \execheu$. Furthermore, for 3 and 7 cross flows, we illustrate the relation of the obtained backlog bounds and execution times for both methods and segment counts in Figs~\ref{fig:bl} and \ref{fig:exec}. For both figures, the scenario numbers are ordered by the value of the respective heuristic results (execution times and backlog bounds), from smallest to largest. For the backlog bounds, whenever the exact method calculates a more accurate backlog bound, this is represented as a spike down in the plot.

\begin{figure}
\centering
\includegraphics[scale=0.54]{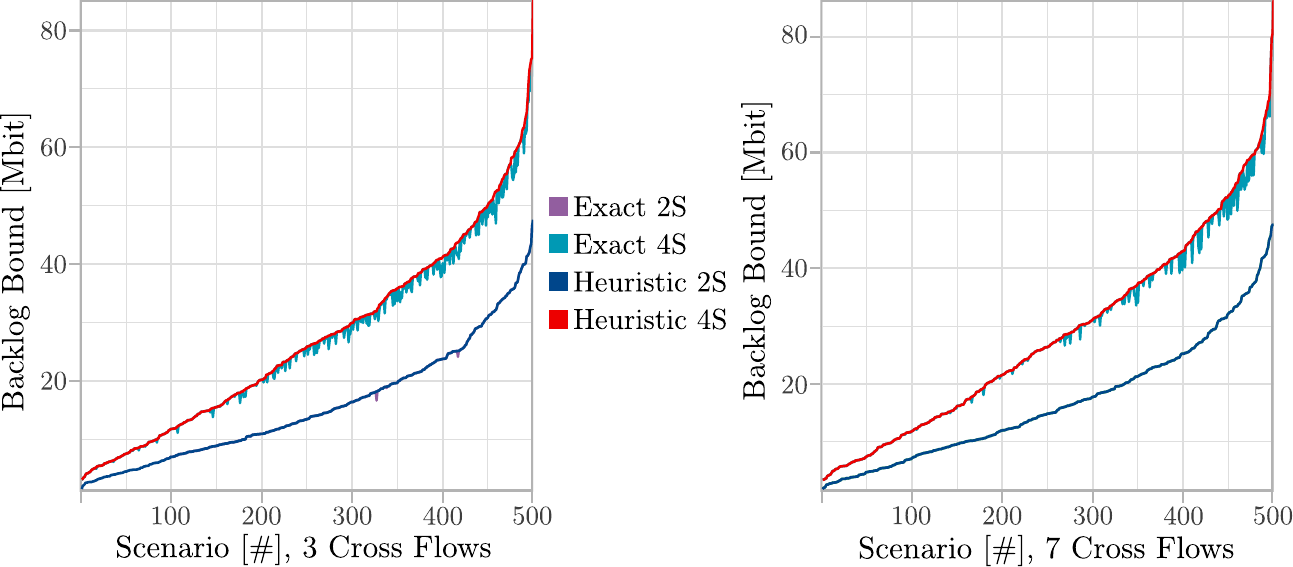}
\caption{Backlog bounds for varying numbers of cross flows and foi segments.}\label{fig:bl}
\end{figure}

\begin{figure}
\centering
\includegraphics[scale=0.52]{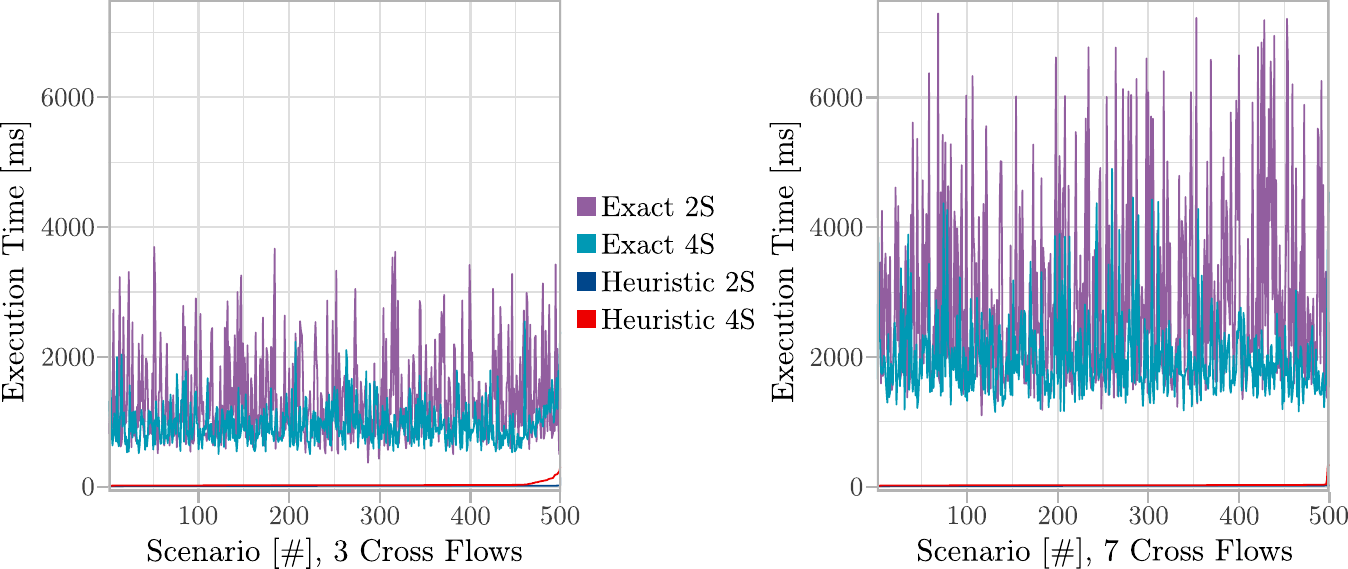}
\caption{Execution time for varying numbers of cross flows and foi segments.}\label{fig:exec}
\end{figure}
\vspace*{-5mm}
We observe that, for two segments, the heuristic provides very accurate results compared to the exact method. For 2 to 4 cross flows, a small percentage of runs do not produce the same backlog bound. For these, we observe a percentage increase of up to 6.7\% of the backlog bound for 3 cross flows. (Yet, the overall mean of the iterations is not affected by this.) The accuracy for four $\foi$ segments is worse, but the percentage increase stays low across all numbers of cross flows. For the execution times, we observe that it grows in the number of cross flows. Additionally, in Table~\ref{tab:occurences} we observe that the ratio of the execution times (speedup) between the two methods grows significantly. For four segments, we observe the same behavior, with a slower speedup across cross flow numbers. 

\section{Usage of the Results in the DiscoDNC Tool}\label{sec:dblanalysis}

In the previous section, we have conducted our experiments using Nancy. Nancy is a general toolbox to support network calculus analyses but does not provide complete analysis methods by itself. Consequently, it does not mandate any specific value for $\theta$, but instead we used it to compute such values. In contrast to Nancy, another open-source tool called DiscoDNC \cite{schmitt2006disco,DiscoDNCv2,scheffler2021network} provides complete network calculus analyses, in particular also for FIFO with a default setting of $\theta$ as follows:  $\discot=\beta^{-1}\left(\sum_{i=2}^n b_1^i\right)$.
In fact, this $\theta$ value coincides with the one that was used in Section~\ref{sec:backlog_exact} as a lower bound to the optimal theta value $\topt$ \emph{when} all arrival curves are token buckets and the service curve is a rate-latency service curve. For more complex curves, it holds that $\discot < h(\alpha_2,\beta)$, however. Setting $\theta \leq h(\alpha_2,\beta)$ results in a $\lo$ that is continuous in all points. It was consequently a reasonable first choice for an NC tool.

Yet, now we are able to adjust the $\theta$ value employed by DiscoDNC using our exact method and the decomposition heuristic. In the following, we are interested in the increase in quality of the calculated backlog bounds we can achieve when adjusting $\theta$ using the exact method as well as the heuristic. 

\begin{figure}
\begin{subfigure}{0.495\textwidth}
    \centering
    \includegraphics[width=0.95\textwidth]{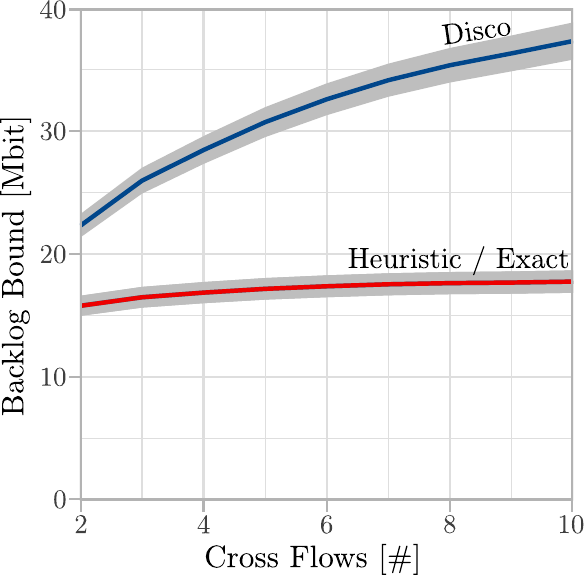}
    \caption{Mean backlog bound.}\label{fig:discocomp}
\end{subfigure}
\begin{subfigure}{0.495\textwidth}
\centering
\includegraphics[width=0.95\textwidth]{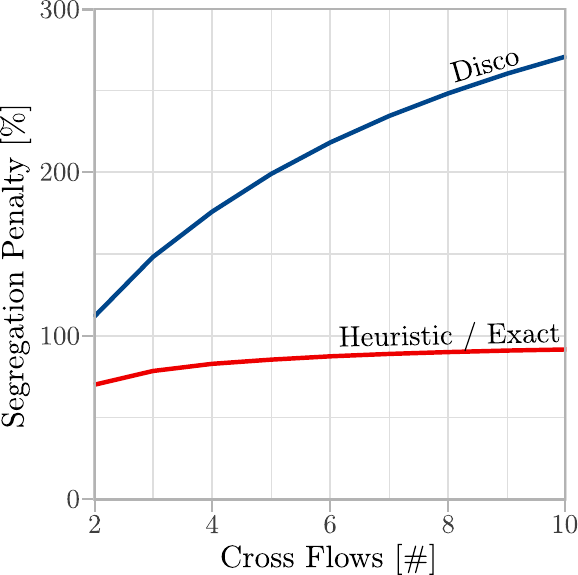}
\caption{Segregation penalty.}\label{fig:segpen}
\end{subfigure}
    \caption{Comparisons of exact method and heuristic against the DiscoDNC tool.}
    \label{fig:discobacklog}
\end{figure}

We use the same experiment setup that we have used in Section~\ref{sub:heueval}. We compare the obtained backlog bound when using $\discot$ against the backlog bound obtained using the exact method and heuristic. The results are illustrated in Fig.~\ref{fig:discocomp}. Here, we have calculated the mean backlog bounds and confidence intervals (CI) for the three methods for different numbers of cross flows. We observe that for all numbers of cross flows, the exact method and heuristic produce very similar backlog bounds (invisible difference in the figure). For $n=10$, the backlog bound of both methods has the same mean and CI, with CI(95\%)$=17.76\pm 0.308$. The backlog bound for the default value $\discot$ has CI(95\%)$=37.34\pm 0.506$. For two cross flows, the exact method and heuristic have CI(95\%)$=15.8\pm 0.272$, and the default value has CI(95\%)$=22.3\pm 0.320$. 

Additionally, we study the segregation penalty (see Section~\ref{sec:sysmod}) for this experimental setup. To this end, we calculate the segregation penalty as $(\sum \qmin^i - q_{\mathrm{agg}}) / q_{\mathrm{agg}} \cdot 100$, where $q_{\mathrm{agg}}$ is the aggregate backlog bound. The results are given in Fig.~\ref{fig:segpen}. We observe that the segregation penalty using the default $\theta$ value of DiscoDNC grows significantly in the number of cross flows, in contrast to the new methods. This may potentially result in sub-optimal system design decisions.

To conclude, the backlog bound using $\discot$ is significantly less accurate than the other two methods for any number of cross flows, with the difference becoming larger and larger the more cross flows traverse the server.

\section{Conclusion}

In this paper, we presented an exact method for deriving FIFO residual service curves that minimize backlog bounds for PWL concave arrival curves. While this method provides precise results, its computational inefficiency becomes apparent in more complex scenarios, such as settings with a large number of crossflows. 
To address this limitation, we introduced an efficient heuristic that provides an upper bound on the backlog-minimizing parameter of the FIFO residual service curve. Despite its approximate nature, we could show the heuristic yields backlog bounds that are close to those of the exact method, particularly in scenarios with a higher number of crossflows. Importantly, it achieves this with significantly reduced execution time.
In a final experiment with the DiscoDNC tool, both the exact and heuristic approaches produced significantly more precise backlog bounds than the existing default setting of the tool.

\bibliographystyle{splncs04}
\bibliography{References}

\begin{thebibliography}{10}
\providecommand{\url}[1]{\texttt{#1}}
\providecommand{\urlprefix}{URL }
\providecommand{\doi}[1]{https://doi.org/#1}

\bibitem{blanc2006quality}
Blanc, A.P.: Quality of service guarantees for {FIFO} queues with constrained
  inputs. University of California, San Diego (2006)

\bibitem{DiscoDNCv2}
Bondorf, S., Schmitt, J.B.: The {DiscoDNC} v2 -- a comprehensive tool for
  deterministic network calculus. In: Proc. of the International Conference on
  Performance Evaluation Methodologies and Tools. pp. 44--49. ValueTools '14
  (December 2014), \url{https://dl.acm.org/citation.cfm?id=2747659}

\bibitem{bouillard2018deterministic}
Bouillard, A., Boyer, M., Le~Corronc, E.: Deterministic Network Calculus: From
  Theory to Practical Implementation. John Wiley \& Sons (2018)

\bibitem{cholvi2002worst}
Cholvi, V., Echag{\"u}e, J., Le~Boudec, J.Y.: Worst case burstiness increase
  due to {FIFO} multiplexing. Performance Evaluation  \textbf{49}(1-4),
  491--506 (2002)

\bibitem{cruz1991calc1}
Cruz, R.L.: A calculus for network delay. {I}. {N}etwork elements in isolation.
  IEEE Transactions on Information Theory  \textbf{37}(1),  114--131 (1991)

\bibitem{cruz1998sced+}
Cruz, R.L.: {SCED}+: Efficient management of quality of service guarantees. In:
  Proceedings. IEEE INFOCOM'98, the Conference on Computer Communications.
  Seventeenth Annual Joint Conference of the IEEE Computer and Communications
  Societies. Gateway to the 21st Century (Cat. No. 98. vol.~2, pp. 625--634.
  IEEE (1998)

\bibitem{le2001network}
Le~Boudec, J.Y., Thiran, P.: Network Calculus: A Theory of Deterministic
  Queuing Systems for the Internet. Springer (2001),
  \url{https://leboudec.github.io/netcal/}

\bibitem{lenzini2005delay}
Lenzini, L., Mingozzi, E., Stea, G.: Delay bounds for {FIFO} aggregates: a case
  study. Computer Communications  \textbf{28}(3),  287--299 (2005)

\bibitem{liebeherr2009system}
Liebeherr, J., Fidler, M., Valaee, S.: A system-theoretic approach to bandwidth
  estimation. IEEE/ACM Transactions on networking  \textbf{18}(4),  1040--1053
  (2009)

\bibitem{mckeown2002achieving}
McKeown, N., Mekkittikul, A., Anantharam, V., Walrand, J.: Achieving 100\%
  throughput in an input-queued switch. IEEE Transactions on Communications
  \textbf{47}(8),  1260--1267 (2002)

\bibitem{scheffler2021network}
Scheffler, A., Bondorf, S.: Network calculus for bounding delays in feedforward
  networks of {FIFO} queueing systems. In: Proc. of the 18th International
  Conference on Quantitative Evaluation of Systems. pp. 149--167. QEST '21
  (August 2021),
  \url{https://link.springer.com/chapter/10.1007/978-3-030-85172-9_8}

\bibitem{schmitt2006disco}
Schmitt, J.B., Zdarsky, F.A.: The disco network calculator: a toolbox for worst
  case analysis. In: Proceedings of the 1st international conference on
  Performance evaluation methodologies and tools. pp. 8--es (2006)

\bibitem{shvachko2010hadoop}
Shvachko, K., Kuang, H., Radia, S., Chansler, R.: The hadoop distributed file
  system. In: 2010 IEEE 26th symposium on mass storage systems and technologies
  (MSST). pp. 1--10. Ieee (2010)

\bibitem{wrege1996deterministic}
Wrege, D.E., Knightly, E.W., Zhang, H., Liebeherr, J.: Deterministic delay
  bounds for {VBR} video in packet-switching networks: fundamental limits and
  practical trade-offs. IEEE/ACM Transactions on networking  \textbf{4}(3),
  352--362 (1996)

\bibitem{zippo2022nancy}
Zippo, R., Stea, G.: Nancy: an efficient parallel network calculus library.
  SoftwareX  \textbf{19},  101178 (2022)

\end{thebibliography}

\newpage 
\appendix

\section{Proofs for Section 3 (System Model)}

\setcounter{thm}{13}
\begin{lem} 
    Let $\alpha \in \mathcal{F}$ and $\beta \in \mathcal{F}$ be given. Then it holds that
    \begin{align}
        v(\alpha,\beta) = v(\alpha, \beta_\downarrow). 
    \end{align}

    \begin{proof} \normalfont{ 
        \begin{align*}
            v(\alpha, \beta_\downarrow) &= \sup_{t \geq 0} \{\alpha(t) - \beta_\downarrow(t) \} \\
            &= \sup_{t \geq 0} \{\alpha(t) - \inf_{s \geq 0}\{ \beta(t+s) \} \} \\
            &= \sup_{t \geq 0} \sup_{s \geq 0} \{ \alpha(t) - \beta(t+s) \} \\
            &= \sup_{s \geq 0} \sup_{t \geq 0} \{ \alpha(t) - \beta(t+s) \} \\
            &= \sup_{s \geq 0} \ v(\alpha, \beta \otimes \delta_{-s}) \\
            &= \sup_{s \geq 0} \ v(\alpha \otimes \delta_{s}, \beta) \\
            \overset{s=0}&{=} v(\alpha,\beta)
        \end{align*}
    } \qed \end{proof}
\end{lem}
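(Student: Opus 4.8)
The plan is to replace $\beta_\downarrow$ by its explicit formula, rewrite the vertical deviation against the closure as a double supremum, and then show that the extra shift variable thereby introduced is optimally set to zero. Recall from Definition~\ref{def:operators} that $\beta_\downarrow = \beta\up{\oslash}0$, which unfolds to $\beta_\downarrow(t)=\inf_{s\geq0}\{\beta(t+s)\}$; this is the only structural fact about the closure I need.

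First I would substitute this into the vertical deviation and push the minus sign through the infimum: since $-\inf_{s\geq0}\beta(t+s)=\sup_{s\geq0}\{-\beta(t+s)\}$, we obtain
\[
v(\alpha,\beta_\downarrow)=\sup_{t\geq0}\Big\{\alpha(t)-\inf_{s\geq0}\beta(t+s)\Big\}=\sup_{t\geq0}\sup_{s\geq0}\{\alpha(t)-\beta(t+s)\}.
\]
I would then swap the two suprema (a supremum over a product index set is order-independent) and, for each fixed $s$, recognize the inner quantity $\sup_{t\geq0}\{\alpha(t)-\beta(t+s)\}$ as a vertical deviation in which a left-shift of $\beta$ by $s$ may be traded for a right-shift of $\alpha$ by $s$, i.e.\ $v(\alpha,\beta\otimes\delta_{-s})=v(\alpha\otimes\delta_s,\beta)$.

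The concluding step is to observe that this outer supremum over $s\geq0$ is attained at $s=0$. Because $\delta_s\leq\delta_0$ and $\delta_0$ is the neutral element of min-plus convolution, monotonicity of $\otimes$ gives $\alpha\otimes\delta_s\leq\alpha\otimes\delta_0=\alpha$ for every $s\geq0$; hence $v(\alpha\otimes\delta_s,\beta)\leq v(\alpha,\beta)$, with equality at $s=0$. Therefore $\sup_{s\geq0}v(\alpha\otimes\delta_s,\beta)=v(\alpha,\beta)$, which is exactly the claim. (The reverse inequality $v(\alpha,\beta_\downarrow)\geq v(\alpha,\beta)$, if one prefers an explicitly two-sided argument, is immediate from $\beta_\downarrow\leq\beta$.)

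I expect the main obstacle to be justifying the shift-interchange $v(\alpha,\beta\otimes\delta_{-s})=v(\alpha\otimes\delta_s,\beta)$ cleanly: it amounts to the change of variables $t\mapsto t+s$ inside the supremum, but one must handle the behaviour for $t<s$ and, implicitly, use that shifting $\alpha$ rightward does not raise the deviation -- a point that hinges on $\alpha$ being non-decreasing (as it is for an arrival curve). Everything else -- the passage through the explicit form of $\beta_\downarrow$, the minus-through-infimum rewriting, the swap of suprema, and the $s=0$ optimality from $\alpha\otimes\delta_s\leq\alpha$ -- is routine.
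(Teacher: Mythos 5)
Your proposal follows exactly the paper's own argument: substitute the explicit formula $\beta_\downarrow(t)=\inf_{s\geq0}\beta(t+s)$, push the negation through to get a double supremum, swap the suprema, reinterpret the inner supremum via the shift identity $v(\alpha,\beta\otimes\delta_{-s})=v(\alpha\otimes\delta_s,\beta)$, and conclude that the outer supremum is attained at $s=0$. If anything, you are slightly more careful than the paper at the final step (justifying $s=0$ optimality via $\alpha\otimes\delta_s\leq\alpha$) and in flagging that the shift interchange implicitly uses that $\alpha$ is non-decreasing, which the paper's chain of equalities leaves unstated.
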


\section{Proofs for Section 4 (Derivation of the $\theta$ Parameter for Minimal Per-Flow Backlog Bounds)}

\begin{lem} 
    Let $\alpha_1$ and $\alpha_2$ be PWL concave arrival curves and let $\beta$ be a PWL convex service curve under FIFO multiplexing. Let $h(\alpha_2,\beta)$ be the horizontal deviation of $\alpha_2$ and $\beta$. Then, it holds for $0 \leq \theta \leq h(\alpha_2,\beta)$ that
    \begin{align*}
        v(\alpha_1, \beta_{\theta}^1) \geq v(\alpha_1, \beta_{h(\alpha_2,\beta)}^1).
    \end{align*}
    \begin{proof} \normalfont{
        Let $0 \leq \theta \leq h(\alpha_2, \beta)$. Clearly, it holds that
        \begin{align}
            \alpha_2(t-\theta) \geq \alpha_2(t-h(\alpha_2, \beta)). \label{Eq:a2_of_theta_geq_a2_of_h}
        \end{align}
        Let $z(\theta)$ be a function of the last intersection of $\alpha_2(t-\theta)$ and $\beta(t)$, given by 
        \begin{align*}
            z(\theta) = \sup\{ t \geq 0 \ | \ \alpha_2(t-\theta) = \beta(t) \}.
        \end{align*}
        It holds that $z(\theta) \geq h(\alpha_2,\beta) \geq \theta$ and that $z(\theta)$ is decreasing for increasing $\theta$. So $z(\theta)$ is minimal for $\theta=h(\alpha_2,\beta)$. Note that we can write $\beta_{\theta \downarrow}^1=\lo\cdot \mathbbm{1}_{\{t \geq z(\theta) \}}$ as $\beta_{\theta}^1(t)$ is non-decreasing for $t\geq z(\theta)$.
        Then we obtain
        \begin{align*}
            v(\alpha_1, \lo) \overset{Eq.(\ref{Eq:vertical_dev_with_LNDC})}&{=} v(\alpha, \beta_{\theta \downarrow}^1 ) \\
            &= \sup_{t\geq0} \{ \alpha_1(t) - \beta_{\theta \downarrow}^1(t) \} \\
            &= \sup_{t\geq0} \{ \alpha_1(t) - ([\beta(t) - \alpha_2(t-\theta)]^+ \cdot \mathbbm{1}_{\{t \geq \theta \}} ) \cdot \mathbbm{1}_{\{t \geq z(\theta) \}} \} \\
            \overset{z(\theta) \geq \theta}&{=} \sup_{t\geq0} \{ \alpha_1(t) - [\beta(t) - \alpha_2(t-\theta)]^+ \cdot \mathbbm{1}_{\{t \geq z(\theta) \}}  \} \\
            \overset{Eq.(\ref{Eq:a2_of_theta_geq_a2_of_h}), \theta\leq h(\alpha_2,\beta)}&{\geq}  \sup_{t\geq0} \{ \alpha_1(t) - [\beta(t) - \alpha_2(t-h(\alpha_2,\beta))]^+ \cdot \mathbbm{1}_{\{t \geq z(h(\alpha_2,\beta)) \}}  \} \\
            \overset{z(\theta) \geq h(\alpha_2,\beta)}&{=}  \sup_{t\geq0} \{ \alpha_1(t) - [\beta(t) - \alpha_2(t-h(\alpha_2,\beta))]^+
            \cdot \mathbbm{1}_{\{t \geq h(\alpha_2,\beta) \}} 
            \cdot \mathbbm{1}_{\{t \geq z(h(\alpha_2,\beta)) \}} \}  \\
            &= v(\alpha, \beta_{h(\alpha_2,\beta) \downarrow}^1 ) \overset{Eq.(\ref{Eq:vertical_dev_with_LNDC})}{=} v(\alpha, \beta_{h(\alpha_2,\beta)}^1 )
        \end{align*}
    } \qed \end{proof}
\end{lem}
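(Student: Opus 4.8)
The plan is to reduce the claim to a \emph{pointwise} comparison of the lower non-decreasing closures of the two residual curves, exploiting that the vertical deviation is invariant under taking the closure and order-reversing in its second argument. Writing $h \coloneqq h(\alpha_2,\beta)$ for brevity, I would first invoke Eq.~\eqref{Eq:vertical_dev_with_LNDC} to replace $v(\alpha_1,\beta_\theta^1)$ by $v(\alpha_1,\beta_{\theta\downarrow}^1)$ and $v(\alpha_1,\beta_h^1)$ by $v(\alpha_1,\beta_{h\downarrow}^1)$, so that both leftover curves may be treated as genuinely non-decreasing functions. Since $f\mapsto v(\alpha_1,f)$ reverses pointwise order (if $f\le g$ then $\alpha_1-f\ge\alpha_1-g$, hence $\sup_t(\alpha_1-f)\ge\sup_t(\alpha_1-g)$), it then suffices to establish the single pointwise inequality $\beta_{\theta\downarrow}^1 \le \beta_{h\downarrow}^1$.

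The key intermediate object I would introduce is the last intersection $z(\theta)\coloneqq\sup\{t\ge0 : \alpha_2(t-\theta)=\beta(t)\}$ of the shifted cross-flow curve and the service curve. The central structural claim is that $\beta_{\theta\downarrow}^1 = \beta_\theta^1\cdot\mathbbm{1}_{\{t\ge z(\theta)\}}$: because $\beta-\alpha_2(\cdot-\theta)$ is convex (convex minus concave), its positive part vanishes up to $z(\theta)$ and is increasing beyond it, so the running infimum from the right that defines $\beta_\downarrow\coloneqq\beta\up{\oslash}0$ erases the curve exactly on $[0,z(\theta))$ and leaves it unchanged on $[z(\theta),\infty)$. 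I would then record the two monotonicities that drive the comparison: since $\alpha_2$ is non-decreasing and $\theta\le h$ we have $\alpha_2(t-\theta)\ge\alpha_2(t-h)$, whence $[\beta-\alpha_2(\cdot-\theta)]^+\le[\beta-\alpha_2(\cdot-h)]^+$; and since $z$ decreases in $\theta$ we have $z(\theta)\ge z(h)$, whence $\mathbbm{1}_{\{t\ge z(\theta)\}}\le\mathbbm{1}_{\{t\ge z(h)\}}$. Both factors of the product $\beta_{\theta\downarrow}^1 = [\beta-\alpha_2(\cdot-\theta)]^+\cdot\mathbbm{1}_{\{t\ge z(\theta)\}}$ are thus dominated by the corresponding factors of $\beta_{h\downarrow}^1$, yielding $\beta_{\theta\downarrow}^1\le\beta_{h\downarrow}^1$ pointwise and closing the argument.

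It is worth emphasising \emph{why} the closure step is indispensable: a naive pointwise comparison of $\beta_\theta^1$ and $\beta_h^1$ fails on the intermediate window $\theta<t\le h$, where $\beta_\theta^1(t)>0$ but $\beta_h^1(t)=0$ --- the wrong direction. This apparent ``bump'' of the smaller-$\theta$ residual lies entirely below its last intersection $z(\theta)$ and is therefore annihilated by the running infimum, which is exactly what the closure captures. Accordingly, I expect the main obstacle to be the structural characterization $\beta_{\theta\downarrow}^1 = \beta_\theta^1\cdot\mathbbm{1}_{\{t\ge z(\theta)\}}$: this requires using the convex-minus-concave shape to pin down where the residual is increasing, confirming that $z(\theta)\ge\theta$ so that the impulse-induced factor $\mathbbm{1}_{\{t\ge\theta\}}$ is subsumed by $\mathbbm{1}_{\{t\ge z(\theta)\}}$, and treating the positive part carefully at the edges of the U-shaped profile. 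Once this characterization and the monotonicity of $z$ are in place, the remaining supremum manipulations are routine.
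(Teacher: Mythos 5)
Your proof is correct and takes essentially the same route as the paper's: both rest on the closure identity $v(\alpha_1,\beta_{\theta}^1)=v(\alpha_1,\beta_{\theta\downarrow}^1)$, the last-intersection point $z(\theta)$ with the characterization $\beta_{\theta\downarrow}^1=\beta_{\theta}^1\cdot\mathbbm{1}_{\{t\ge z(\theta)\}}$ (with $z(\theta)\ge\theta$ subsuming the impulse factor), and the two monotonicity facts $\alpha_2(t-\theta)\ge\alpha_2(t-h(\alpha_2,\beta))$ and $z(\theta)\ge z(h(\alpha_2,\beta))$. The only difference is presentational: you package these facts as a pointwise domination $\beta_{\theta\downarrow}^1\le\beta_{h(\alpha_2,\beta)\downarrow}^1$ followed by anti-monotonicity of $v(\alpha_1,\cdot)$ in its second argument, whereas the paper expresses exactly the same comparison as a chain of supremum inequalities.
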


\begin{lem} 
    Let $\alpha$ be a PWL concave function and $\beta$ be a PWL convex function. Let $A$ and $B$ be the set of breakpoints of $\alpha$ and $\beta$, respectively. Then, the vertical deviation of $\alpha$ and $\beta$ can always be calculated at some time $t \in A \cup B$.

    \begin{proof} \normalfont{ 
        Suppose the vertical deviation $v(\alpha,\beta)$ of $\alpha$ and $\beta$ is assumed or calculated at $t' \in \mathbb{R} \setminus (A \cup B)$ and cannot be calculated at $t \in A \cup B$. So $v(\alpha,\beta) = \sup_{\tau \geq 0}\{\alpha(\tau) - \beta(\tau)\} \overset{\tau=t'}{=} \alpha(t) - \beta(t)$ holds. 
        Since the stability condition holds and $\alpha$ is concave and $\beta$ is convex, there always exists such a $t'$ for which we assume the supremum. 
        
        Consider the linear segments of $\alpha$ and $\beta$ at time $t'$, in particular the respective rates $r^{t'}$ and $R^{t'}$ (according to Def.~\ref{def:linear_segment_at_time_t}).
        There are three cases to consider: $r^{t'} = R^{t'}$, $r^{t'} > R^{t'}$ and $r^{t'} < R^{t'}$. \\
        \underline{Case I} ($r^{t'} = R^{t'}$): Since $r^{t'} = R^{t'}$ holds the vertical distance of $\alpha$ and $\beta$ does neither increase nor decrease for the linear segments $\alpha^{t'}$ and $\beta^{t'}$. So for breakpoints $t_l = \max \{t \in A \cup B : t < t'\}$ and $t_u = \min \{t \in A \cup B : t > t'\}$ it should also hold that $v(\alpha,\beta) = \alpha(t') - \beta(t') = \alpha(t_l) - \beta(t_l) = \alpha(t_u) - \beta(t_u)$. $\quad \lightning$ \\
        \underline{Case II} ($r^{t'} > R^{t'}$): Since $r^{t'} > R^{t'}$ holds, the vertical distance of $\alpha$ and $\beta$ is increasing with time until the first breakpoint for which the rate of $\alpha$ becomes lower than the rate of $\beta$. Let $x = a^*_{\alpha,\beta}$ (see Def.~\ref{def:a_star_and_s_star}). So, it should also hold that $v(\alpha,\beta) = \alpha(t') - \beta(t') < \alpha(x) - \beta(x). \quad \lightning$  \\
        \underline{Case III} ($r^{t'} < R^{t'}$): Analogous to Case II.
    } \qed \end{proof}
\end{lem}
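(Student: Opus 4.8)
The plan is to reduce the statement to an elementary property of concave piecewise-linear functions. The key observation is that the difference $g \coloneqq \alpha - \beta$ is itself piecewise linear and \emph{concave}: since $\alpha$ is concave and $\beta$ is convex, $-\beta$ is concave, so $g = \alpha + (-\beta)$ is a sum of concave functions. Because $g$ is affine on every maximal interval on which both $\alpha$ and $\beta$ are affine, every point where the slope of $g$ changes must be a point where either $\alpha$ or $\beta$ has a breakpoint; hence the breakpoint set of $g$ is contained in $A \cup B$. It therefore suffices to show that the finite supremum $v(\alpha,\beta) = \sup_{t \geq 0}\{\alpha(t) - \beta(t)\}$ is attained at a breakpoint of $g$.

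I would then exploit concavity directly: the slopes of $g$ across successive linear segments form a non-increasing sequence. This splits into three regimes, which is exactly the case analysis on the local rates $r^{t'}$ (slope of $\alpha$) and $R^{t'}$ (slope of $\beta$) furnished by Definition~\ref{def:linear_segment_at_time_t}. If at a candidate interior maximizer $t'$ one has $r^{t'} = R^{t'}$, then $g$ is constant across that segment and the deviation is already attained at the adjacent breakpoints $t_l, t_u \in A \cup B$. If $r^{t'} > R^{t'}$, then $g$ is strictly increasing at $t'$, so moving right toward the next breakpoint strictly increases $g$, contradicting maximality at an interior point; invoking $a^*_{\alpha,\beta}$ from Definition~\ref{def:a_star_and_s_star} makes the ``advance until the slope sign flips'' step precise and names the explicit breakpoint at which the maximum lands. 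The case $r^{t'} < R^{t'}$ is symmetric, with the maximum pushed leftward (and at worst to the endpoint $t=0$, which we include among the relevant breakpoints). The regime in which all slopes are nonnegative, so $g$ would be non-decreasing with supremum at infinity, is ruled out by the stability/utilization assumption in force throughout the paper, which forces the terminal rate of $\alpha$ to fall strictly below that of $\beta$; this guarantees $g \to -\infty$ and hence that the supremum is genuinely attained.

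I would package this as a short proof by contradiction mirroring that structure: assume the deviation is attained only at some $t'$ strictly interior to a linear segment (so $t' \notin A \cup B$), inspect $r^{t'}$ versus $R^{t'}$, and contradict maximality in each case. The main obstacle I anticipate is not the algebra but the careful handling of existence and boundary behavior — arguing that a maximizing $t'$ exists at all (which relies on the stability condition) and treating the left endpoint $t = 0$ together with the half-open nature of the linear pieces without slipping. The concave-structure viewpoint is what keeps the three-case split exhaustive and closed for an arbitrary finite number of segments rather than merely locally, so I would lean on it to discharge these edge cases cleanly.
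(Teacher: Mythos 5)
Your proposal is correct and follows essentially the same route as the paper's proof: a contradiction argument with the identical three-case split on the local rates $r^{t'}$ versus $R^{t'}$, using $a^*_{\alpha,\beta}$ to name the breakpoint where the maximum lands and the stability condition to guarantee the supremum is attained. Your explicit framing via the concave PWL difference $g = \alpha - \beta$ (whose breakpoints lie in $A \cup B$ and whose slopes are non-increasing) is a slightly cleaner way to certify that the case analysis is globally exhaustive, but it is a repackaging of the paper's argument rather than a different one.
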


\setcounter{thm}{17}
\begin{lem} 
    Let $\max_{t \in \mathcal{T}} \{v_t(\theta)\}$ and $\alpha_1(\theta)$ be defined as above. For the endpoints of the given interval $[h(\alpha_2,\beta), t_{\max}]$ it holds that
    \begin{align*}
        \alpha_1(h(\alpha_2,\beta)) \leq \max_{t \in \mathcal{T}} \{v_t(h(\alpha_2,\beta))\} \ \textrm{and} \
        \alpha_1(t_{\max}) = \max_{t \in \mathcal{T}} \{v_t(t_{\max})\}.
    \end{align*}

    \begin{proof}
        \normalfont{
            By the definition of $v_t(\theta)$ curves it holds that $\max_{t \in \mathcal{T}} \{v_t(\theta)\} \geq \alpha_1(\theta)$, since $\alpha_1(\theta) = v_0(\theta) \in \{v_t(\theta) \ | \ t \in \mathcal{T} \}$ for $\theta \geq h(\alpha_2, \beta) > 0$. This directly implies that $\alpha_1(h(\alpha_2,\beta)) \leq \max_{t \in \mathcal{T}} \{v_t(h(\alpha_2,\beta))\}$.
            We also know by the definition of $v_t(\theta)$ curves that $\max_{t \in \mathcal{T}} \{v_t(t_{\max})\} = \alpha_1(t_{\max})$ holds.
        } \qed 
    \end{proof}
\end{lem}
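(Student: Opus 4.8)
The plan is to exploit the two-piece structure of the curves $v_t(\theta)$ fixed when the family was introduced: on $[h(\alpha_2,\beta),t)$ each $v_t$ is strictly decreasing, whereas on $[t,\infty)$ it is identically $\alpha_1(\theta)$. The observation I would build on is that $\alpha_1(\theta)$ is not an external competitor but the common increasing branch into which every member of the family eventually merges; concretely, $v_t(\theta)=\alpha_1(\theta)$ for every breakpoint $t$ with $t\le\theta$. Both claims then reduce to deciding, at each endpoint, which curves still sit on their decreasing branch and which have already reached this shared branch.

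For the left endpoint I would first prove the stronger pointwise statement $\alpha_1(\theta)\le\max_{t\in\mathcal{T}}\{v_t(\theta)\}$ for every $\theta$ in the interval and then specialize it to $\theta=h(\alpha_2,\beta)$. To obtain the inequality it suffices to exhibit a single breakpoint $t\in\mathcal{T}$ lying at or below $h(\alpha_2,\beta)$, for instance $t=0\in A_1$. Since $\theta\ge h(\alpha_2,\beta)\ge t$ holds throughout the interval, this curve obeys $v_t(\theta)=\alpha_1(\theta)$ everywhere, so $\alpha_1$ is itself a member of the family $\{v_t : t\in\mathcal{T}\}$ and the pointwise maximum can never drop below it.

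For the right endpoint I would evaluate at $\theta=t_{\max}=\max\mathcal{T}$. Here every $t\in\mathcal{T}$ satisfies $t\le t_{\max}=\theta$, so each $v_t$ has already passed its breakpoint and lies on the increasing branch, where it equals $\alpha_1(t_{\max})$. Consequently the maximum over $t$ collapses to the single common value $\alpha_1(t_{\max})$, which is precisely the claimed equality.

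I expect the only delicate point to be the boundary bookkeeping at the left endpoint: the definition of $v_t$ is stated through the strict cases $h(\alpha_2,\beta)\le\theta<t$ and $\theta\ge t>h(\alpha_2,\beta)$, neither of which formally covers a breakpoint $t\le h(\alpha_2,\beta)$. I would resolve this by treating such small breakpoints separately, noting that for them $\theta\ge t$ holds on the entire interval so that $v_t\equiv\alpha_1$ there; this is what legitimizes identifying $\alpha_1(\theta)$ with the curve $v_0(\theta)$ inside the family. Once this identification is in place, both assertions follow directly from the monotonicity structure of the $v_t$ curves, with no further computation.
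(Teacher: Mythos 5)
Your proof is correct and follows essentially the same route as the paper's: both identify $\alpha_1(\theta)$ as a member of the family $\{v_t(\theta) \mid t \in \mathcal{T}\}$ (via the breakpoint $t=0$, i.e.\ $v_0 \equiv \alpha_1$ on the interval) to get the left-endpoint inequality, and both observe that at $\theta = t_{\max}$ every $t \in \mathcal{T}$ satisfies $t \leq \theta$, so all curves have merged onto the branch $\alpha_1(\theta)$ and the maximum collapses to $\alpha_1(t_{\max})$. Your explicit handling of breakpoints $t \leq h(\alpha_2,\beta)$, which the case definition of $v_t$ does not formally cover, is a welcome tightening of a detail the paper's proof passes over silently.
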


\section{Proofs for Section 5 (Efficient Calculation of Near-Optimal Backlog Bounds)}

\setcounter{thm}{21}
\begin{lem}
It holds that $\topt^1\geq \dots \geq \topt^n$.
\end{lem}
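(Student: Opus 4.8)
The plan is to reduce the claim to a monotonicity property of the horizontal deviation. By Thm.~\ref{thm:basecaseNew}, applied to each token-bucket segment $\gamma_i=\gamma_{r_i,b_i}$ of $\foi$ in the role of the (token-bucket) flow of interest, the segment-wise optimizer is $\topt^i = h(\alpha_2+\gamma_{r_i,0},\beta)$. Hence it suffices to show that $i\mapsto h(\alpha_2+\gamma_{r_i,0},\beta)$ is non-increasing, which will follow once we establish that the horizontal deviation is monotone in its first argument.

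First I would record that, since $\foi$ is in PWL concave normal form, Eq.~\eqref{eq:concp1} gives the strict ordering of rates $r_1 > r_2 > \dots > r_n$. For $t>0$ this yields $\gamma_{r_i,0}(t)=r_i t > r_{i+1}t=\gamma_{r_{i+1},0}(t)$, and both vanish at $t=0$, so that $\alpha_2+\gamma_{r_i,0}\geq \alpha_2+\gamma_{r_{i+1},0}$ holds pointwise.

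The key step is a monotonicity lemma for $h$: for any $f_1\geq f_2$ and non-decreasing $g$, unfolding Def.~\ref{def:horizontal_deviation} shows that for each fixed $t$ the set $\{d\geq 0 : f_1(t)\leq g(t+d)\}$ is contained in $\{d\geq 0 : f_2(t)\leq g(t+d)\}$, since any shift $d$ that clears the higher value $f_1(t)$ necessarily clears $f_2(t)$ as well. Taking infima reverses the inclusion and then taking the supremum over $t$ gives $h(f_1,g)\geq h(f_2,g)$. Applying this with $f_1=\alpha_2+\gamma_{r_i,0}$, $f_2=\alpha_2+\gamma_{r_{i+1},0}$, and $g=\beta$ yields $\topt^i\geq\topt^{i+1}$ for every $i$; chaining these inequalities establishes $\topt^1\geq\dots\geq\topt^n$.

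I do not expect a genuine obstacle here. The only subtlety is verifying the inclusion and the resulting infimum direction in the monotonicity argument -- intuitively, a larger left-hand curve requires at least as large a horizontal shift before it is dominated by $\beta$ -- which becomes immediate once the defining sets are written out. (One should note that $\beta$ being non-decreasing is used implicitly; as a convex service curve in normal form this holds.)
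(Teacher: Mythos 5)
Your proof is correct, but it takes a genuinely different route from the paper's. The paper argues qualitatively: from Eq.~\eqref{eq:concp1} the segment rates decrease in $i$, and combined with the convexity of $\beta_\theta^1$ it concludes that the segment with the highest rate assumes its backlog at the largest time, so the $\topt^i$ are ordered. Your argument instead makes this rigorous by exploiting the closed form of Thm.~\ref{thm:basecaseNew}, namely $\topt^i = h(\alpha_2+\gamma_{r_i,0},\beta)$ (valid since the burst term of each segment cancels and only the rate $r_i$ enters), and then proving a general monotonicity lemma for the horizontal deviation in its first argument: if $f_1\geq f_2$ pointwise, then for each $t$ the set $\{d\geq 0: f_1(t)\leq g(t+d)\}$ is contained in $\{d\geq 0: f_2(t)\leq g(t+d)\}$, so the infima are ordered the right way and taking suprema gives $h(f_1,g)\geq h(f_2,g)$. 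Combined with $r_1>\dots>r_n$ from Eq.~\eqref{eq:concp1}, this chains into the claim. What your approach buys is a self-contained, checkable argument that avoids the paper's informal appeal to ``the largest time at which the backlog is assumed''; what the paper's version buys is brevity and a geometric picture. One minor remark: your parenthetical that $\beta$ being non-decreasing is ``used implicitly'' is unnecessary --- the set inclusion holds for arbitrary $g$ once $f_1(t)\geq f_2(t)$, so no property of $\beta$ beyond those already assumed is needed.
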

\begin{proof}
From Eq.~\eqref{eq:concp1}, we know that the rate of $\gamma_i$ decreases in $i$. Together with $\beta_\theta^1$ being convex, the largest time at which the backlog of a single segment of $\foi$ is assumed is for the first segment $\topt^1$, as $S_1$ has the highest rate of all segment curves. Consequently, all subsequent $\topt^i$ are ordered and decrease in $i$. 
\qed \end{proof}
\begin{lem}
There is at most one $\topt^i$ with $\topt^i=\toptp^i$.
\end{lem}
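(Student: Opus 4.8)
The plan is to exploit the tension between two monotonicities: by Lem.~\ref{cor:tqiorder} the unadjusted values $\topt^1\ge\dots\ge\topt^n$ are non-increasing in $i$, whereas the segment intervals $I_i=[a_i,a_{i+1})$ move strictly to the right as $i$ grows. A decreasing sequence can therefore land inside its own (rightward-marching) interval for at most one index. First I would restate the matching condition in terms of these intervals. Reading off Eq.~\eqref{eq:tqadjust}, the adjustment leaves $\topt^i$ untouched exactly in the ``otherwise'' branch, so $\topt^i=\toptp^i$ holds if and only if $a_i\le\topt^i\le a_{i+1}$, i.e.\ $\topt^i\in[a_i,a_{i+1}]$.

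Next I would argue by contradiction. Assume two distinct indices $i<j$ both satisfy the condition, so $\topt^i\in[a_i,a_{i+1}]$ and $\topt^j\in[a_j,a_{j+1}]$. Because the breakpoints are increasing and $j\ge i+1$, we have $a_{i+1}\le a_j$, which yields the chain
\begin{align*}
    \topt^i \le a_{i+1} \le a_j \le \topt^j .
\end{align*}
Hence $\topt^i\le\topt^j$. But Lem.~\ref{cor:tqiorder} gives the reverse inequality $\topt^i\ge\topt^j$ since $i<j$. Consequently every step in the chain must be an equality, forcing $\topt^i=a_{i+1}=a_j=\topt^j$; in particular the two values coincide and $j=i+1$ (as the breakpoints are strictly increasing by the normal form of Def.~\ref{def:concave_piecewise_linear_normal_form}).

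The last step is to rule out this collapsed case, and this is where I expect the real work to lie: the weak ordering of Lem.~\ref{cor:tqiorder} alone permits the degenerate tie $\topt^i=\topt^{i+1}=a_{i+1}$, where the matched value sits on the shared (closed) endpoint of $I_i$ and $I_{i+1}$. To exclude it I would upgrade the ordering to a strict one using Thm.~\ref{thm:basecaseNew}: there $\topt^i=h(\alpha_2+\gamma_{r_i,0},\beta)$, and its closed form is strictly increasing in the segment rate $r_i$ (the numerator contains the term $(r_2^{\astar}+r_i)\cdot\astar$ with $\astar>0$). Since Eq.~\eqref{eq:concp1} guarantees $r_1>\dots>r_n$, this gives the strict chain $\topt^1>\dots>\topt^n$, contradicting $\topt^i=\topt^j$ for $i\neq j$. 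Thus at most one index can satisfy $\topt^i=\toptp^i$. The main obstacle is therefore not the sandwich argument but justifying strict monotonicity of $\topt^i$ in $i$ from the closed form, which in turn requires checking that $\astar=a^*_{\alpha_2+\gamma_{r_i,0},\beta}>0$ so that the rate term genuinely contributes.
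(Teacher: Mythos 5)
Your sandwich argument (the decreasing sequence $\topt^1\geq\dots\geq\topt^n$ from Lem.~\ref{cor:tqiorder} played against the rightward-moving intervals) is exactly the paper's proof; the paper phrases it as: if $\topt^i\in I_i$, then every earlier index satisfies $\topt^j\geq\topt^i\geq a_i\geq a_{j+1}$ and every later one satisfies $\topt^j\leq\topt^i<a_{i+1}\leq a_j$, so no other index can be unadjusted. The difference lies in how the boundary is treated, and that is where your proof has a genuine gap. Your final step claims the ordering can be upgraded to a strict one, $\topt^1>\dots>\topt^n$, by reading strict monotonicity in $r_i$ off the closed form in Thm.~\ref{thm:basecaseNew}. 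That claim is false in general. The quantity $\topt^i=h(\alpha_2+\gamma_{r_i,0},\beta)$ is only \emph{weakly} increasing in $r_i$: the point $\tau=a^*_{\alpha_2+\gamma_{r_i,0},\beta}$ itself depends on $r_i$, and whenever the horizontal deviation is attained at the burst of $\alpha_2$, i.e.\ $\tau=0$, the value does not depend on $r_i$ at all. Concretely, take $\alpha_2=\gamma_{r_2,b_2}$ and $\beta=\beta_{R,T}$, and two consecutive foi segment rates with $r_2+r_i\leq R$ (hence also $r_2+r_{i+1}\leq R$); then $\topt^i=\topt^{i+1}=T+b_2/R$. So the condition $\tau>0$ that you yourself flag as ``where the real work lies'' cannot be established --- it genuinely fails in perfectly ordinary scenarios, which is exactly why Lem.~\ref{cor:tqiorder} asserts only a weak ordering.

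The correct repair is not strictness but the interval semantics. The paper defines each segment's domain as the half-open interval $I_i=[a_i,a_{i+1})$, and its proof reads ``$\topt^i=\toptp^i$'' as ``$\topt^i\in I_i$'' (this is also the reason case 2 of Eq.~\eqref{eq:tqadjust} adjusts to $a_{i+1}-\epsilon$ rather than to $a_{i+1}$). With that reading your chain becomes $\topt^i<a_{i+1}\leq a_j\leq\topt^j$, i.e.\ strictly $\topt^i<\topt^j$, which already contradicts the weak ordering; the degenerate tie $\topt^i=a_{i+1}=\topt^j$ is excluded by definition, since $a_{i+1}\notin I_i$. Note that under your literal closed-interval reading the lemma itself can fail in the coincidence $\topt^i=\topt^{i+1}=a_{i+1}$ (possible in the token-bucket/rate-latency example above if $T+b_2/R$ happens to equal $a_{i+1}$), so the gap cannot be patched along your route: it has to be resolved at the level of how the adjustment rule is interpreted, not by sharpening the monotonicity of the $\topt^i$.
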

\begin{proof}
Let an arbitrary but fixed $\topt^i=\toptp^i$, i.e., $\topt^i\in I_i$ . Then, using Lem.~\ref{cor:tqiorder}, we know that $\topt^1,\dots,\topt^{i-1}\geq \topt^i$. Since $\topt^i\in I_i$, no earlier segment can assume its backlog on its respective interval according to Eq.~\eqref{eq:concp2}, as the respective interval limits are $< a_i$. Likewise, $\topt^{i+1},\dots,\topt^n\leq \topt^i$, i.e., no later segment can assume its backlog on its respective minimal interval as all interval limits $\geq a_{i+1}$. It follows that if $\exists \topt^i, \topt^i=\toptp^i$, there exists exactly one such entry. 
\qed \end{proof}

\section{Table for Section 5 (Efficient Calculation of Near-Optimal Backlog Bounds)}

\begin{table}
    \centering
    \begin{tabular}{|c|c|c|c|c|c|c|c|c|c|}
      \hspace{2px}Cross [\#] & 2 & 3 & 4 & 5 & 6 & 7 & 8 & 9 & 10 \\ \hline \hline 
        $\muex$ & 26.1 & 26.9 & 27.4 & 27.7 & 28.0 & 28.2 & 28.4 & 28.5 & 28.6 \\
        CI(95\%) & 1.3 & 1.4 & 1.4 & 1.5 & 1.5 & 1.4 & 1.4 & 1.5 & 1.5   \\ \hline 
        $\muheu$ & 26.6 & 27.4 & 27.8 & 28.2 & 28.4 & 28.5 & 28.6 & 28.8 & 28.8 \\
        CI(95\%) & 1.4 & 1.4 & 1.5 & 1.4 & 1.4 & 1.5 & 1.5 & 1.5 & 1.5 \\ \hline
        Acc. [\%] & 67.6 & 73.0 & 77.2 & 79.4 & 81.6 & 84.6 & 85.6 & 86.8 & 89.0   \\ \hline 
        Inc. [\%] & 4.3 & 4.8 & 5.5 & 5.2 & 4.5 & 4.9 & 4.7 & 4.0 & 3.7   \\ 
        CI(95\%) & 0.3 & 0.2 & 0.2 & 0.3 & 0.2 & 0.2 & 0.2 & 0.3 & 0.2    \\ \hline \hline 
        $\execex$ [ms] & 766 & 971 & 1201 & 1457 & 1714 & 1993 & 2253 & 2527 & 2799 \\ 
        $\execheu$ [ms] & 33 & 29 & 26 & 25 & 26 & 26 & 24 & 25 & 25 \\
        Speedup & 23 & 34 & 47 & 59 & 67 & 78 & 93 & 101 & 110 \\ \hline 
    \end{tabular}
    \caption{Comparison of exact method and heuristic for $\foi$ with four segments.}
\end{table}

\end{document}